%Version 2.1 April 2023
% See section 11 of the User Manual for version history
%
%%%%%%%%%%%%%%%%%%%%%%%%%%%%%%%%%%%%%%%%%%%%%%%%%%%%%%%%%%%%%%%%%%%%%%
%%                                                                 %%
%% Please do not use \input{...} to include other tex files.       %%
%% Submit your LaTeX manuscript as one .tex document.              %%
%%                                                                 %%
%% All additional figures and files should be attached             %%
%% separately and not embedded in the \TeX\ document itself.       %%
%%                                                                 %%
%%%%%%%%%%%%%%%%%%%%%%%%%%%%%%%%%%%%%%%%%%%%%%%%%%%%%%%%%%%%%%%%%%%%%
 
\documentclass[sn-mathphys]{sn-jnl}% Math and Physical Sciences Reference Style

\usepackage{graphicx}% 
\usepackage{amssymb,amsfonts}%
\usepackage{amsthm}%  
\usepackage{textcomp}%
\usepackage{manyfoot}% 
\usepackage{listings}% 
\usepackage{subcaption} 
\usepackage{float}
\usepackage{dblfloatfix}
\usepackage{thmtools}
\usepackage{algpseudocode}
\usepackage{algorithm}
\usepackage{cancel} 
\usepackage{enumerate}
\usepackage{soul} 
\usepackage{mathtools} 

\usepackage{tikz} %to draw automata 
\usetikzlibrary{tikzmark}
\usetikzlibrary{automata,positioning,
    shadows,
    shapes.multipart,shapes.misc,backgrounds,shapes.geometric}

\tikzstyle{activity}=[rectangle split ,rounded corners,rectangle split parts=2, fill=orange, draw ,text centered]

\definecolor{light-gray}{gray}{0.85}
\definecolor{lighter-gray}{gray}{0.95} 
\definecolor{turquoise}{rgb}{0.0, 0.71, 0.81}
\definecolor{peach}{rgb}{0.97, 0.59, 0.35}
\definecolor{YellowGreen}{rgb}{0.60, 0.80, 0.44}
\definecolor{MyLavender}{RGB}{164, 105, 209}

\tikzset{
    activity/.style = {
            rectangle split,
            rectangle split parts=2,
            rectangle split part fill={light-gray,lighter-gray}, % <-- enable different fill for each node part
            draw, rounded corners, inner sep=1.5,outer sep=0,
            align=left, text=black }
}
\tikzset{
    action rectangle/.style = {
            rectangle, font=\footnotesize,
            align=center, text=black }
}
\tikzset{every state/.style={minimum size=1pt } }
\tikzset{
    my rectangle/.style={
            blue,
            rounded corners
        }
}
\tikzset{elliptic state/.style={draw,ellipse}}
\newcommand*{\QEDB}{\null\nobreak \hfill\scalebox{0.9}{\ensuremath{\blacksquare}}}%

\newcommand{\eventset}{{\mathbb{E}}}
\newcommand{\outcomeset}{{\mathbb{O}}}
\newcommand{\resourseset}{{\mathbb{S}}}
\newcommand{\timefunc}{{\mathcal{T}}}
\newcommand{\mapfunc}{{\mathcal{M}}}
\newcommand{\ioa}{{\mathit{Y}}}
\newcommand{\acts}{{\mathit{Acts}}}
\newcommand{\actsuni}{{\mathbb{ACT}}}
\newcommand{\Dee}{{\mathbb{D}}}
\newcommand{\DA}{{\mathbb{D}_\mathbb{A}}} 
\newcommand{\DE}{{\mathbb{D}_\mathbb{E}}}
\newcommand{\DLC}{{\mathbb{D}_\mathit{LC}}} 
\newcommand{\DAC}{{\mathbb{D}_\mathit{AC}}} 
\newcommand{\DaC}{{\mathbb{D}_\mathit{aC}}} 
\newcommand{\behfunc}{{\mathcal{B}}}
\newcommand{\Vee}{{\mathcal{V}}}
\newcommand{\emptysequence}{{\tau}} 
%\jyear{2023}% 
%% as per the requirement new theorem styles can be included as shown below
%\theoremstyle{thmstyleone}%
\newtheorem{theorem}{Theorem}%  meant for continuous numbers
%%\newtheorem{theorem}{Theorem}[section]% meant for sectionwise numbers
%% optional argument [theorem] produces theorem numbering sequence instead of independent numbers for Proposition
\newtheorem{proposition}{Proposition} 
\newtheorem{lemma}{Lemma}
\newtheorem{corollary}{Corollary}
\newtheorem{definition}{Definition}%
\declaretheorem[style=definition]{assumption}

%\raggedbottom
%%\unnumbered% uncomment this for unnumbered level heads
\begin{document}

\title[Time- and Behavior-Preserving Execution of Determinate Supervisory Control]{Time- and Behavior-Preserving Execution of Determinate Supervisory Control}

\author*[1]{\fnm{Alireza} \sur{Mohamadkhani}}\email{a.mohammadkhani@tue.nl}

\author[1]{\fnm{Marc} \sur{Geilen}}\email{m.c.w.geilen@tue.nl}

\author[1]{\fnm{Jeroen} \sur{Voeten}}\email{j.p.m.voeten@tue.nl}
\author[1,2]{\fnm{Twan} \sur{Basten}}\email{a.a.basten@tue.nl}

\affil[1]{\orgdiv{Department of Electrical Engineering}, \orgname{Eindhoven University of Technology}, \orgaddress{\street{De Zaale}, \city{Eindhoven},  \country{The Netherlands}}}
 
\affil[2]{\orgdiv{ESI (TNO)},    \city{Eindhoven},   \country{The Netherlands}}

\abstract{
    The \emph{activity framework} is a promising model-based design approach for Flexible Manufacturing Systems (FMS).
    It is used in industry for specification and analysis of FMS.
    It provides an intuitive specification language with a hierarchical view of the system's actions and events, activities built from them, and
    an automaton that captures the overall behavior of the system in terms of sequences of activities corresponding to its accepted words.
    It also provides a scalable timing analysis method using max-plus linear systems theory.
    The framework currently requires manual implementation of the supervisory controller that governs the system behavior.
    This is labor-intensive and error-prone.
    In this article, we turn the framework into a model-driven approach by introducing an execution architecture and
    execution engine that allow a specification to be executed in a time- and behavior-preserving fashion.
    We prove that the architecture and engine preserve the specified ordering of actions and events in
    the specification as well as the timing thereof up to a specified bound.
    We validate our approach on a prototype production system.
}

\keywords{model-deriven design, discrete event systems, supervisory control, automated execution}

%%\pacs[JEL Classification]{D8, H51}

%%\pacs[MSC Classification]{35A01, 65L10, 65L12, 65L20, 65L70}

\maketitle
 
\section{Introduction}\label{section:introduction}
  
A flexible manufacturing system (FMS) is an integrated, computer-controlled, complex of
automated material handling devices and machine tools that can
simultaneously process multiple types of parts while reducing the overhead time, effort, and cost of changing production line
configurations in response to changing requirements \cite{stecke1983formulation}.
FMS try to achieve the efficiency of automated mass production, while having the
flexibility to respond to changes in market demand and produce several part types simultaneously.

The flexibility of FMS comes with increased system complexity, making it more difficult to
design and implement these systems \cite{elmaraghy2005flexible}. FMS typically include different subsystems that are required to
work together to deliver the required results. This makes the design and implementation of these systems
an increasingly labor-intensive challenge that benefits from design automation.

Model-based design \cite{nicolescu2018model} of FMS is a design approach in which we
model and analyze the FMS to obtain a design that satisfies the requirements of the system.
It helps the design process of these systems by providing models
that are both understandable by engineers and amenable to mathematical analysis.
Model-driven design \cite{schmidt2006model} is a specialization of model-based design where
models drive the implementation.
This has the potential to reduce costs and time to deploy these systems. It also
significantly reduces the odds of human error in the implementation process and yields a system
that preserves the prescribed model behavior and satisfies performance requirements.
It can further give engineering teams a concise view of how the system behaves and allows automated
optimizations that are hard to perform manually.

The activity framework \cite{sanden2016} is a modular and scalable model-based design methodology for FMS.
It focuses on the \emph{supervisory control} layer of the FMS that controls the system by orchestrating
the actions and events that the \emph{plant} layer executes.
The framework is being used by some of the world leaders in the semiconductor industry (such as ASML, VDL-ETG, and ITEC)
for specification, mechanical layout analysis, and performance analysis of FMS \nolinebreak\cite{sanden2015, lsat}.
It provides an intuitive specification language that abstracts system actions and events and their dependencies into
\emph{activities}. Each activity represents a determinate
piece of system behavior. As an example, consider the activity of turning a part during production with a
gripper, consisting of gripping, movement, and rotation actions; initial movement precedes gripping,
which in turn precedes further movement with rotation occurring in parallel. Determinacy in the context of the activity framework
means that variations that respect action dependencies within an activity do not influence the behavior
of the system, implying that the system is robust against timing variations.
The activity abstraction provides a scalable timing analysis
technique rooted in max-plus linear systems theory. The framework uses a modular approach to logistics
controller specification using automata.
It also provides best-case and worst-case makespan and throughput analysis
through max-plus computations.

The activity framework currently requires manual implementation of the supervisory
control layer by the engineering teams after its design is finalized.
Manual implementation is error-prone and cumbersome. It can introduce unspecified behavior to the system due to
lack of clear mutual understanding of the engineering teams. It can also fail to achieve the performance targets
that the model has predicted.

\begin{figure}[t]
    \centering
    \scalebox{1.0}{
        \includegraphics[clip,width=0.6\columnwidth]{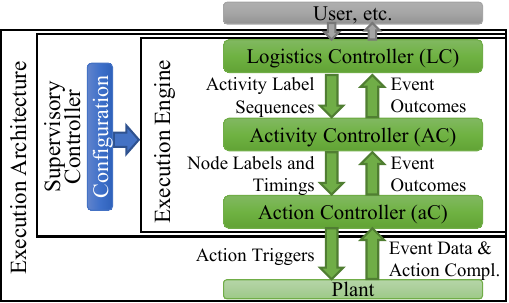}
    }
    \caption{The execution architecture}
    \label{fig:architecture}
\end{figure}

In this article, we enable model-driven design of FMS based on the activity framework by introducing
an execution architecture and an execution engine, called the Activity Execution Engine.
Using the concepts introduced in \cite{mohamadkhani}, we define the activity framework with \emph{event feedback},
which is a mechanism to provide feedback from the plant layer of an FMS to the supervisory control layer.
To specify the logistics controller, we introduce deterministic I/O automata to specify the behavior of the system in the form of
sequences of activities that are executed in response to specified event feedback the outcome of which is resolved at execution time.

%The controller specifies determinate sequences of activities that are
%derived from paths in the I/O automaton. These paths then lead to states that decide the next path based on event outcome.

The execution architecture that we present in this article,
illustrated in Figure \ref{fig:architecture}, consists of our proposed execution engine, which is the
implementation of the supervisory controller, and the physical plant.
The execution engine is configured with an activity model of an FMS with event feedback and a deterministic I/O automaton
as input, and executes it on the plant. It is organized in three sublayers,
responsible for logistics, activity execution, and action execution, respectively, with well-defined interfaces.
We formulate a correctness relation between the behavior of the model and the physical execution
that captures the preservation of the ordering of actions and events of a behavior and their timing up to a specified bound, as well
as respecting the outcomes of events received during the execution.
We prove that our execution engine is time- and behavior-preserving according to that relation.
The actions and events that are executed only depend on the outcome of specified events and
not on any timing deviation of actions or events in the execution. Thus, the behavior of the system is determinate
up to the resolution of the outcomes of events.
Our execution engine thus guarantees determinate behavior of the system despite timing variations that may happen in execution.
We validate our solution on a prototype production system.

The remainder of this article is organized as follows.
Section \nolinebreak\ref{section:motivexample}
presents a motivating example. In Section \ref{section:model}, we
describe the modeling framework that we use in this article to specify system behavior and timing, and present a model of a
simple running example.
In Section \ref{section:relation}, we discuss the timing and behavior relation that our execution architecture
preserves between the model and the execution. Section \ref{section:executionArch}
describes the plant model that is assumed in the architecture, as well as the execution engine, and proves that
the execution engine preserves timing and behavior of the specification.
Then, we validate our proposed execution architecture and execution engine in Section \ref{section:validation}.
In Section \ref{section:relatedWrok}, we examine related literature and compare the state of the art to our approach.
Finally, Section \ref{section:conclusion} concludes our work.

\section{Motivating Example}\label{section:motivexample}

In this section, we describe a motivating example to illustrate our approach.
The eXplore Cyber-Physical Systems (xCPS) \cite{xcps} platform (Figure \ref{fig:xcps}) is a small-scale
production line. It is capable of producing different products. One of its production scenarios
is that it assembles \emph{top} pieces and \emph{bottom} pieces, and
outputs the assembled products. The machine captures the features and behavior of FMS present in
industry. The pieces go through several processing steps in
the machine using belts and stoppers: first, a \emph{gantry arm} collects them and puts them onto an input belt, which
moves them to a \emph{turner} station where all tops are correctly oriented. The bottoms are assumed to be correctly oriented so they pass
through this station.  
After the turner station, the bottoms are pushed to an \emph{indexing table} by a \emph{switch}, and the tops move to an \emph{assembler}.
The proper routing of pieces is managed through a number of sensors, stoppers, and switches.
When a correctly-oriented top arrives at the assembler, the assembler picks it up and waits for
the bottom that was pushed to the indexing table. The table rotates and positions the bottom under the assembler station. The assembler then
puts the top that it picked on the bottom on the indexing table. Afterward the assembled product is pushed to the output belt by a \emph{switch} and
is removed from the assembly line.

\begin{figure}[t!]
    \centering
    \begin{subfigure}{0.8\linewidth}
        \centering
        \scalebox{1}{
            \includegraphics[clip,width=0.7\columnwidth]{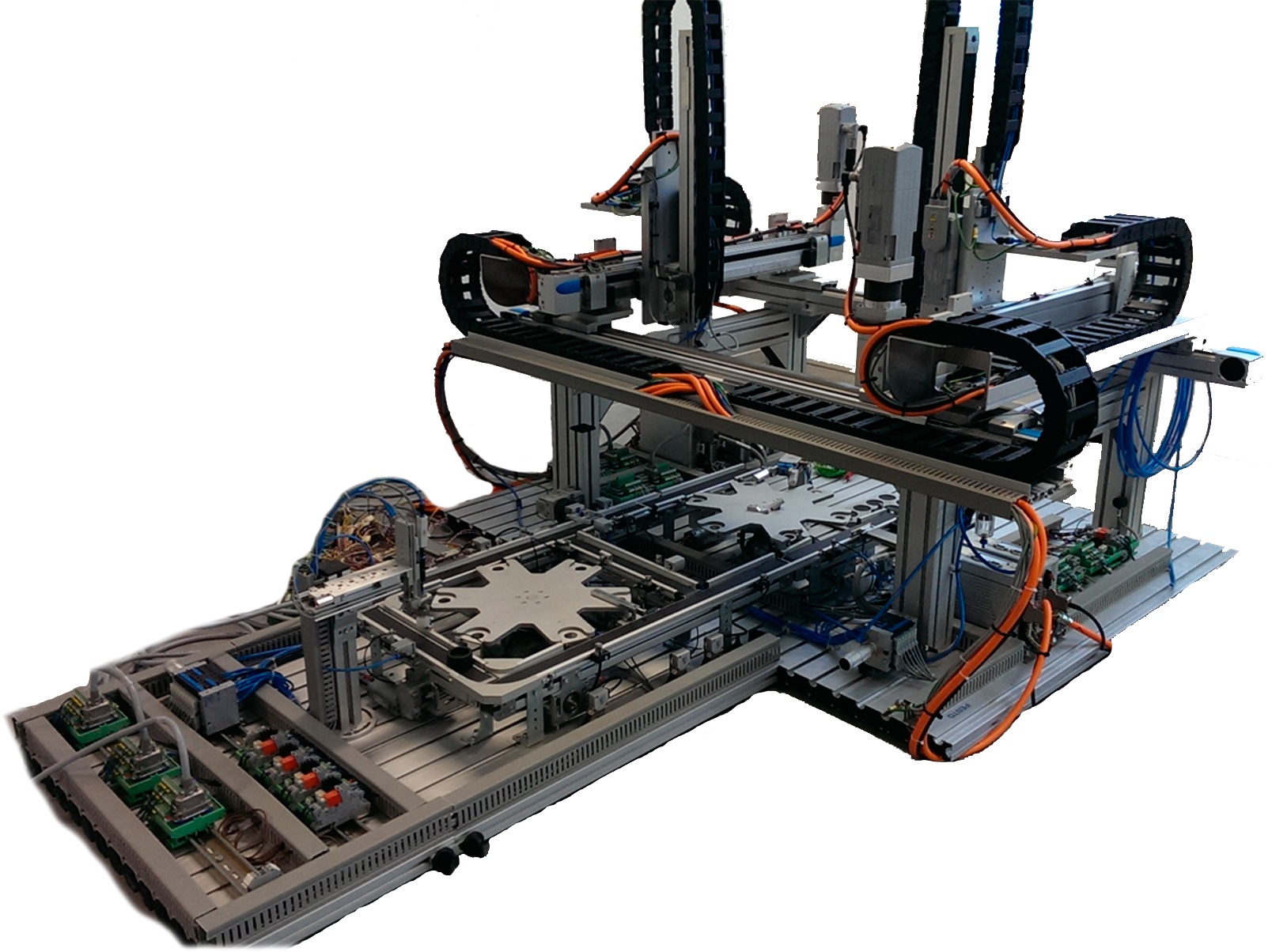}
        }
        \caption{The physical layout}
    \end{subfigure}
    \begin{subfigure}{1\linewidth}
        \centering
        \scalebox{1}{
            \includegraphics[clip,width=0.7\columnwidth]{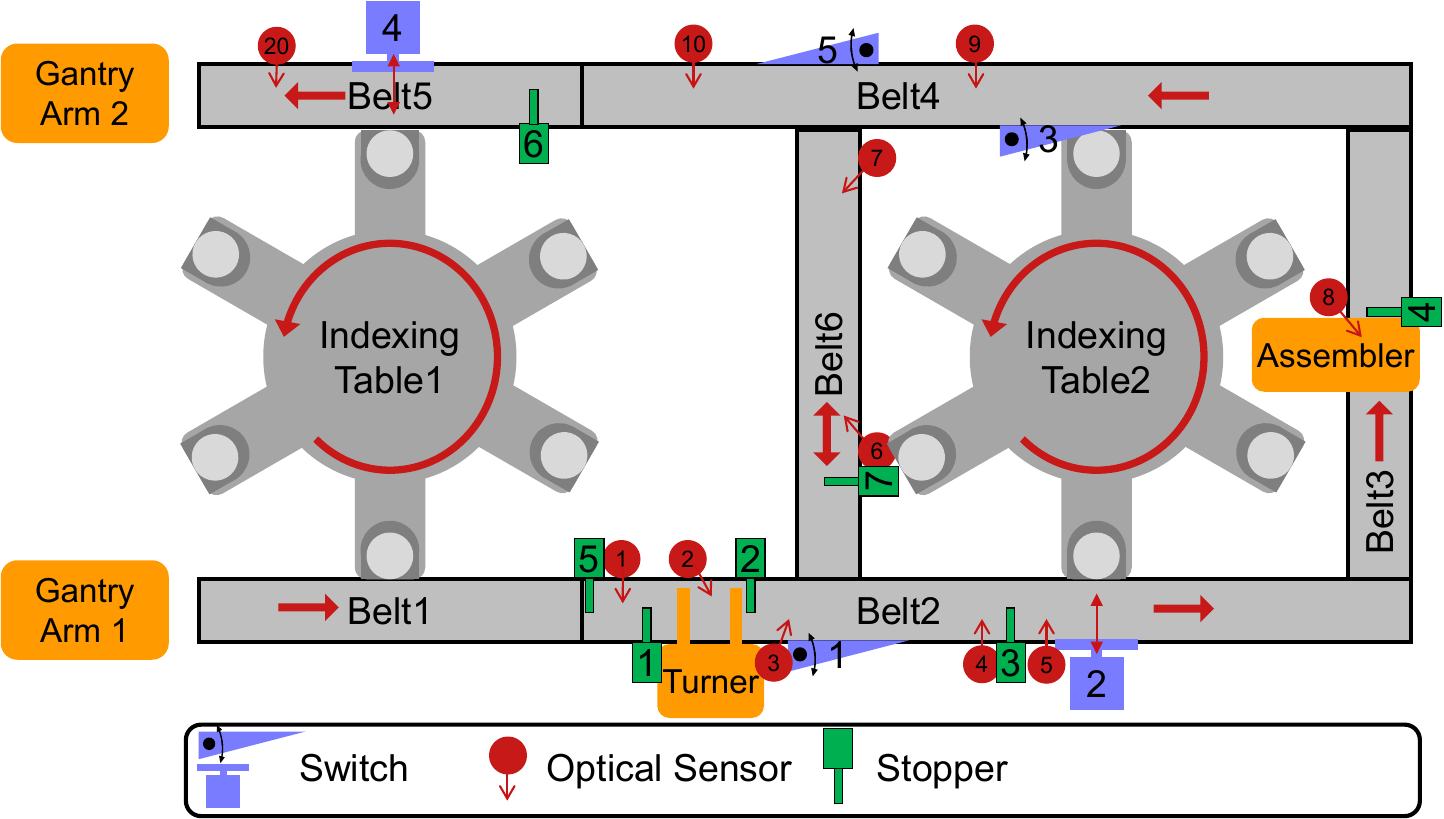}
        }
        \caption{Schematic view}
        \label{fig:xcps-schematic}
    \end{subfigure}
    \caption{The xCPS }
    \label{fig:xcps}
\end{figure}

We model the xCPS using the activity framework.
The basis of an activity model is a set of \emph{resources}, for instance the turner station or the assembler station.
Resources can have \emph{peripherals} that are able to perform \emph{actions}. For example, the
turner station has a motor that can perform
the actions to pick up a piece, another motor that can rotate a piece,
and a gripper that can grab or release a piece.
The supervisory control of a machine also requires feedback from the machine. For example, if an assembler fails to pick up a top piece,
the controller needs to respond accordingly. For this purpose, the activity framework
incorporates event feedback. Event feedback is introduced in \cite{mohamadkhani} in the form of the switched max-plus linear system characterization.
We define the corresponding operational characterization in this article.
\emph{Events} are planned notifications of outcomes of activities
from the plant to the supervisory controller that enable the
controller to react to occurrences of interest within the plant.
Events have a set of possible \emph{outcomes} that are non-deterministic and are resolved at runtime.
The xCPS, for instance, has an event notifying success or failure, emitted by the assembler every time it attempts to pick up a top piece.
To assemble a product, the \emph{assemble} activity is performed where the assembler on belt 3
goes down, grabs a top piece, moves up, extends to the indexing table, goes down, and finally
releases the piece. A sensor indicates whether the assembler was able to pick up the piece or not.
This constitutes the event that the activity assemble emits with two possible outcomes, \emph{success} and \emph{failure}.
In the case of a failure, the piece is moved to a reject bin by the belts.

To implement an FMS, the engineers face challenges such as
the integral control of \emph{synchronization},  \emph{concurrency},  \emph{timing} and  \emph{pipelining} of
actions and activities while properly and timely responding to event feedback from the plant.
Handling all these aspects simultaneously is difficult using current engineering methods.
In the case of the xCPS,
we have different stations around the indexing table that have to work concurrently and in a synchronized way
to ensure proper assembly.
The transport of a piece by belts and the indexing table alignment with the belts have to happen concurrently.
The pushing on and off of the table have to happen simultaneously, while the assembly has to wait for
correct alignment. So we have to maintain timing, concurrency, and the correct ordering of actions that we execute.
The activities around the table have to be pipelined if we aim for maximum system utilization and productivity.
Another challenge is that certain actions are time-sensitive and
executing them at the wrong time leads to incorrect behavior.
In xCPS, pushing the bottoms onto the table is one such action.
There is no sensor in front of switch~2 to detect pieces, so we have to rely on timing.
We release the stopper that is holding a bottom piece and
activate switch 2 at very precisely defined times.

The example above shows the delicate details of the xCPS that have to be accounted for
to ensure proper behavior of the system. These challenges are present in many FMS.
Current state of practice leaves the implementation of the supervisory control layer to manual programming, which is error-prone.
We propose a model-driven approach, in which an activity model of the system can be used to automatically configure standard,
reusable execution software to realize its supervisory control including concurrency, pipelining, precisely timed actions and event feedback.

\section{System Specification, Behavior and Timing}\label{section:model}

In this section, we explain the activity framework \cite{sanden2016},
and its extension with event-feedback, as introduced in \nolinebreak\cite{mohamadkhani}, for specification and analysis purposes.
The main purpose is to precisely define system specification in the framework,
and how it captures system behavior and timing, as it is preserved by our execution engine.
Therefore, this section is organized into three subsections, corresponding to specification, behavior, and timing.

%\COMMENT{and link back to model-based claims}

\subsection{System Specification}

A system model consists of a set of activities,
an I/O automaton that specifies the logistics controller that captures the possible sequences of activities that can be executed,
a set of events, a set of event outcomes, and the relation between them.
Activities consist of actions and events and the dependencies between them.
The I/O automaton captures the response of the logistics controller to the outcomes of events
in terms of the activity sequences that it executes.

The activity framework models activities as Directed Acyclic Graphs (DAGs),
where nodes are either \emph{actions} performed on \emph{peripherals}, \emph{claims} or \emph{releases}
of a \emph{resource}, or \emph{events} being emitted by the activity.
The concept of resource is used as a model abstraction that allows imposing constraints
on the model to ensure correctness of behavior and validity of execution.
Resources have a (possibly empty) set of peripherals. Resources can be \emph{claimed}
and \emph{released}.

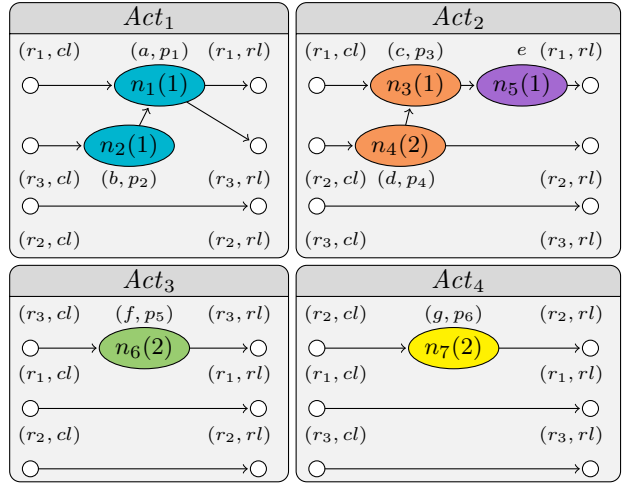
\begin{figure}[t]
    \centering
    \begin{tikzpicture}
        \tikzstyle{every state}=[  inner sep=1pt, minimum size=0pt]
        \node (n1) [activity]
        {
            \textbf{$\mathit{Act}_1$}
            \nodepart[align=left]{two}

            \begin{tikzpicture}[shorten >=1pt,on grid,auto]

                \node[state] (q_0) [fill=white, minimum size=6pt] {};
                \node[action rectangle] (l_0) [ above right=0.45cm and 0.25cm of q_0]{$(r_1,cl)$};
                \node[state] (q_1) [right=1.7 of q_0, fill=turquoise, style={font=\small}, style={draw,ellipse}] {$n_1 (1)$};
                \node[action rectangle] (l_0) [ above=0.45 of q_1]{$(a,p_1)$};
                \node[state] (q_3) [right=1.3 of q_1, fill=white, minimum size=6pt] {};
                \node[action rectangle] (l_0) [ above left=0.45 and 0.25 of q_3]{$(r_1,rl)$};

                \node[state] (q_4) [below=0.8 of q_0,fill=white, minimum size=6pt] {};
                \node[action rectangle] (l_0) [ below right=0.45cm and 0.25cm of q_4]{$(r_3,cl)$};
                \node[state] (q_5) [right=1.3 of q_4, fill=turquoise, style={font=\small}, style={draw,ellipse}] {$n_2 (1)$};
                \node[action rectangle] (l_0) [ below=0.45 of q_5]{$(b,p_2)$};
                \node[state] (q_7) [below=0.8 of q_3, fill=white, minimum size=6pt] {};
                \node[action rectangle] (l_0) [  below left=0.45 and 0.25 of q_7]{$(r_3,rl)$};

                \node[state] (q_8) [below=0.8 of q_4,fill=white, minimum size=6pt] {};
                \node[action rectangle] (l_0) [ below right=0.45cm and 0.25cm of q_8]{$(r_2,cl)$};
                \node[state] (q_9) [below=0.8 of q_7, fill=white, minimum size=6pt] {};
                \node[action rectangle] (l_0) [  below left=0.45 and 0.25 of q_9]{$(r_2,rl)$};

                \path[->]
                (q_0) edge   (q_1)
                (q_1) edge   (q_3)
                (q_4) edge   (q_5)
                (q_1) edge   (q_7)
                (q_5) edge  (q_1)
                (q_8) edge  (q_9)
                ;
            \end{tikzpicture}
        };

        \node (n2) [activity, right=0.1 of n1 ]
        {
            \textbf{$\mathit{Act}_2$}
            \nodepart[align=left]{two}

            \begin{tikzpicture}[shorten >=1pt,on grid,auto]

                \node[state] (q_0) [fill=white, minimum size=6pt] {};
                \node[action rectangle] (l_0) [ above right=0.45cm and 0.25cm  of q_0]{$(r_1,cl)$};
                \node[state] (q_1) [right=1.3 of q_0, fill=peach , style={font=\small}, style={draw,ellipse}] {$n_3(1)$};
                \node[action rectangle] (l_0) [ above=0.45 of q_1]{$(c,p_3)$};
                \node[state] (q_2) [right=1.4 of q_1, fill=MyLavender, style={font=\small},style={draw,ellipse} ] {$n_5(1)$};
                \node[action rectangle] (l_0) [ above=0.45 of q_2]{$e$};
                \node[state] (q_3) [right=0.9 of q_2, fill=white, minimum size=6pt] {};
                \node[action rectangle] (l_0) [ above left=0.45 and 0.25 of q_3]{$(r_1,rl)$};

                \node[state] (q_4) [below=0.8 of q_0,fill=white, minimum size=6pt] {};
                \node[action rectangle] (l_0) [ below right=0.45cm and 0.25cm  of q_4]{$(r_2,cl)$};
                \node[state] (q_5) [right=1.1 of q_4, fill=peach, style={font=\small}, style={draw,ellipse} ] {$n_4(2)$};
                \node[action rectangle] (l_0) [ below=0.45 of q_5]{$\;\;   (d,p_4)$};
                \node[state] (q_7) [below=0.8  of q_3, fill=white, minimum size=6pt] { };
                \node[action rectangle] (l_0) [ below left=0.45 and 0.25 of q_7]{$(r_2,rl)$};

                \node[state] (q_8) [below=0.8 of q_4,fill=white, minimum size=6pt] {};
                \node[action rectangle] (l_0) [ below right=0.45cm and 0.25cm  of q_8]{$(r_3,cl)$};
                \node[state] (q_9) [below=0.8  of q_7, fill=white, minimum size=6pt] { };
                \node[action rectangle] (l_0) [ below left=0.45 and 0.25 of q_9]{$(r_3,rl)$};

                \path[->]
                (q_0) edge   (q_1)
                (q_2) edge   (q_3)
                (q_4) edge   (q_5)
                (q_1) edge   (q_2)
                (q_5) edge   (q_1)
                (q_5) edge   (q_7)
                (q_8) edge   (q_9)
                ;
            \end{tikzpicture}
        };
        \node (n3) [activity, below=0.1 of n1 ]
        {
            \textbf{$\mathit{Act}_3$}
            \nodepart[align=left]{two}

            \begin{tikzpicture}[shorten >=1pt,on grid,auto]

                \node[state] (q_0) [fill=white, minimum size=6pt] {};
                \node[action rectangle] (l_0) [ above right=0.45cm and 0.25cm  of q_0]{$(r_3,cl)$};
                \node[state] (q_1) [right=1.5 of q_0, fill=YellowGreen, style={font=\small}, style={draw,ellipse}] {$n_6(2)$};
                \node[action rectangle] (l_0) [ above=0.45 of q_1]{$(f,p_5)$};
                \node[state] (q_3) [right=1.5 of q_1, fill=white, minimum size=6pt] {};
                \node[action rectangle] (l_0) [ above left=0.45 and 0.25 of q_3]{$(r_3,rl)$};

                \node[state] (q_8) [below=0.8 of q_0,fill=white, minimum size=6pt] {};
                \node[action rectangle] (l_0) [ above right=0.45cm and 0.25cm  of q_8]{$(r_1,cl)$};
                \node[state] (q_9) [below=0.8  of q_3, fill=white, minimum size=6pt] { };
                \node[action rectangle] (l_0) [ above left=0.45 and 0.25 of q_9]{$(r_1,rl)$};

                \node[state] (q_80) [below=0.8 of q_8,fill=white, minimum size=6pt] {};
                \node[action rectangle] (l_0) [ above right=0.45cm and 0.25cm  of q_80]{$(r_2,cl)$};
                \node[state] (q_90) [below=0.8  of q_9, fill=white, minimum size=6pt] { };
                \node[action rectangle] (l_0) [ above left=0.45 and 0.25 of q_90]{$(r_2,rl)$};

                \path[->]
                (q_0) edge   (q_1)
                (q_1) edge   (q_3)
                (q_8) edge   (q_9)
                (q_80) edge   (q_90)
                ;
            \end{tikzpicture}
        };
        \node (n4) [activity, right=0.1 of n3  ]
        {
            \textbf{$\mathit{Act}_4$}
            \nodepart[align=left]{two}

            \begin{tikzpicture}[shorten >=1pt,on grid,auto]

                \node[state] (q_4) [ fill=white, minimum size=6pt] { };
                \node[action rectangle] (l_0) [ above right=0.45cm and 0.25cm  of q_4]{$(r_2,cl)$};
                \node[state] (q_5) [right=1.8 of q_4, fill=yellow, style={font=\small}, style={draw,ellipse}] {$n_7(2)$};
                \node[action rectangle] (l_0) [ above=0.45 of q_5]{$(g,p_6)$};
                \node[state] (q_7) [right=1.8 of q_5, fill=white, minimum size=6pt] { };
                \node[action rectangle] (l_0) [ above left=0.45 and 0.25 of q_7]{$(r_2,rl)$};

                \node[state] (q_8) [below=0.8 of q_4,fill=white, minimum size=6pt] {};
                \node[action rectangle] (l_0) [ above right=0.45cm and 0.25cm  of q_8]{$(r_1,cl)$};
                \node[state] (q_9) [below=0.8  of q_7, fill=white, minimum size=6pt] { };
                \node[action rectangle] (l_0) [ above left=0.45 and 0.25 of q_9]{$(r_1,rl)$};

                \node[state] (q_80) [below=0.8 of q_8,fill=white, minimum size=6pt] {};
                \node[action rectangle] (l_0) [ above right=0.45cm and 0.25cm  of q_80]{$(r_3,cl)$};
                \node[state] (q_90) [below=0.8  of q_9, fill=white, minimum size=6pt] { };
                \node[action rectangle] (l_0) [ above left=0.45 and 0.25 of q_90]{$(r_3,rl)$};

                \path[->]
                (q_4) edge   (q_5)
                (q_5) edge   (q_7)
                (q_8) edge   (q_9)
                (q_80) edge   (q_90)
                ;
            \end{tikzpicture}
        };
    \end{tikzpicture}

    \caption{Activities $\mathit{Act}_1,\mathit{Act}_2,\mathit{Act}_3,$  and $\mathit{Act}_4$ of the running example. Colors are used to distinguish between the action nodes of different activities and also between action and event nodes.}
    \label{fig:activities-1234}
\end{figure}

Figure \ref{fig:activities-1234} depicts activities $\mathit{Act}_1,\mathit{Act}_2,\mathit{Act}_3,$
and $\mathit{Act}_4$ of our running example.
Action nodes are colored and labeled (for example $n_1$) and their timing durations are shown between parentheses.
We omit node labels for resource nodes for brevity.
We use labels on top or at the bottom of each node to depict whether they are actions on peripherals,
events, or claims or releases of resources. For example, node $n_1$ represents action $a$ being performed on peripheral $p_1$.
Arrows denote dependencies. For example, $n_2$ has to complete before $n_1$ can start.
Event nodes denote a notification from the plant layer to the supervisory control layer.
Each event has one or more known possible outcomes and depending on the outcome that is received, the logistics controller
responds by executing a different sequence of activities.

The activity framework provides a hierarchical model with
different levels of abstraction, from individual actions, to activities,
sequences of activities, and an automaton that captures sequences of activities.
The automaton uses activities on its transition labels, making it
significantly smaller than a logistics controller for individual actions,
which enhances scalability.

The activity framework formalizes these concepts as follows.
Let $\mathbb{A}$ denote the set of actions, let $\resourseset$ denote the set of resources,
and let $\mathbb{P}$ denote the set of peripherals.
We assume a function $\mathcal{R} : \mathbb{P} \rightarrow \resourseset$, where $\mathcal{R}(p)$  is the
resource that contains $p$. To model event feedback, we further let $\eventset$ be the set
of events and $\outcomeset$ be the set of event outcomes.
We define an activity as follows.

\begin{definition} \label{def:activity}
    (Activity). An activity is a quadruple ($\mathit{Nod},\rightarrow\nolinebreak, \mapfunc, \timefunc$) where
    \begin{itemize}
        \item $(\mathit{Nod},{\rightarrow})$ is a DAG, with $\mathit{Nod}$ a set of nodes and ${\rightarrow} \subseteq \mathit{Nod} \times \mathit{Nod} $ a set of dependencies;
        \item $\mapfunc : \mathit{Nod} \rightarrow \mathbb{A} \times \mathbb{P} \cup \resourseset \times \{rl, cl\} \cup \eventset$ is a function that maps every node to an action label $(a,p)$ denoting action $a$ on peripheral $p$, a release or claim label where  $(r,\mathit{rl})$ denotes a release  of resource $r$ and $(r,\mathit{cl})$ denotes a claim of resource $r$, or an event;
        \item $\timefunc: \mathit{Nod} \rightarrow \mathbb{R}_{\geq 0}$ is a function that assigns a non-negative time value to every node such that $\timefunc(n)=0$ if $\mapfunc(n)\in \resourseset \times \{rl, cl\}$; $\timefunc$ denotes \emph{specified duration} if $\mapfunc(n)\in \mathbb{A} \times \mathbb{P}$, and it denotes \emph{specified delay} if $\mapfunc(n)\in \eventset$.
    \end{itemize}
    We write $n_1 \rightarrow n_2$ to denote that $(n_1, n_2) \in {\rightarrow}$. We denote the transitive closure of $\rightarrow$ by $\rightarrow^+$. We define the predecessor nodes of $n$ as:
    \begin{equation*}
        \mathit{Pred}(n) = \{ n_{in} \in \mathit{Nod} \mid n_{in} \rightarrow n\}
    \end{equation*}
    An activity satisfies the following constraints:
    \begin{enumerate}[I.]
        \item All nodes mapped to the same peripheral are sequentially ordered to avoid self-concurrency on a single peripheral:\\
              $ \forall_{n_1 , n_2 \in \mathit{Nod}, a_1, a_2 \in \mathbb{A}, p \in \mathbb{P}} \: n_1 \neq n_2 \wedge \mapfunc(n_1) = (a_1,p) \wedge \mapfunc(n_2) = (a_2,p)    \Rightarrow   n_1 \rightarrow^+ n_2 \vee n_2 \rightarrow^+ n_1$
        \item Each resource is claimed exactly once: \label{bullet:cl-all}\\
              $  \forall_{r \in \resourseset} \: \exists_{n_1 \in \mathit{Nod}} \: \mapfunc(n_1) = (r,cl) \wedge    \nexists_{n_2 \in \mathit{Nod}} \: n_1\neq{}n_2 \wedge \mapfunc(n_2) = (r,cl)   $
        \item Each resource is released exactly once: \label{bullet:rel-all}\\
              $  \forall_{r \in \resourseset} \: \exists_{n_1 \in \mathit{Nod}} \: \mapfunc(n_1) = (r,rl) \wedge    \nexists_{n_2 \in \mathit{Nod}} \: n_1\neq{}n_2 \wedge \mapfunc(n_2) = (r,rl)   $
        \item Every action node is preceded by a claim node on the corresponding resource:\\
              $  \forall_{n_1 \in \mathit{Nod}, a \in \mathbb{A}, p \in \mathbb{P}, r \in \resourseset} \: \mapfunc(n_1) = (a,p) \wedge \mathcal{R}(p) = r \Rightarrow  \exists_{n_2 \in \mathit{Nod}} \: \mapfunc(n_2) = (r,cl) \wedge   n_2 \rightarrow^+ n_1 $
        \item Every action node is succeeded by a release node on the corresponding resource:\\
              $  \forall_{n_1 \in \mathit{Nod}, a \in \mathbb{A}, p \in \mathbb{P}, r \in \resourseset} \: \mapfunc(n_1) = (a,p) \wedge \mathcal{R}(p) = r \Rightarrow  \exists_{n_2 \in \mathit{Nod}} \: \mapfunc(n_2) = (r,rl) \wedge   n_1 \rightarrow^+ n_2 $
        \item Every release node is preceded by a claim node on the corresponding resource:\\
              $  \forall_{n_2 \in \mathit{Nod}, r \in \resourseset} \:  \mapfunc(n_2) = (r,rl)  \Rightarrow  \exists_{n_1 \in \mathit{Nod}} \: \mapfunc(n_1) = (r,cl) \wedge n_1 \rightarrow^+ n_2 $
        \item Every claim node is succeeded by a release node on the corresponding resource:\\
              $  \forall_{n_1 \in \mathit{Nod}, r \in \resourseset} \:  \mapfunc(n_1) = (r,cl)  \Rightarrow  \exists_{n_2 \in \mathit{Nod}} \: \mapfunc(n_2) = (r,rl) \wedge n_1 \rightarrow^+ n_2 $
        \item No node precedes a claim node:\label{bullet:claim}\\
              $ \forall_{n_1 \in \mathit{Nod},   r \in \resourseset}   \:  \mapfunc(n_1) = (r,cl)  \Rightarrow   \nexists_{  n_2 \in \mathit{Nod} } \:   n_2 \rightarrow^+ n_1  $
        \item Release nodes are not succeeded by any node: \label{bullet:rel} \\
              $ \forall_{n_1 \in \mathit{Nod},   r \in \resourseset}  \:   \mapfunc(n_1) = (r,rl)  \Rightarrow   \nexists_{  n_2 \in \mathit{Nod} } \:   n_1 \rightarrow^+ n_2 $
        \item Every event node is preceded by another node: \label{bullet:event-precede}\\
              $ \forall_{n_1 \in \mathit{Nod}}  \:   \mapfunc(n_1) \in \eventset  \Rightarrow   \exists_{ n_2 \in \mathit{Nod}} \: n_2 \rightarrow^+ n_1  $
        \item Multiple instances of the same event in an activity must be sequentially ordered: \label{bullet:mult-events}\\
              $  \forall_{n_1,n_2 \in \mathit{Nod}} \: \mapfunc(n_1), \mapfunc(n_2) \in \eventset \wedge \mapfunc(n_1) = \mapfunc(n_2)  \Rightarrow  n_1 \rightarrow^+ n_2 \vee  n_2 \rightarrow^+ n_1$
    \end{enumerate}
    We denote the universe of activities with $\actsuni$.
\end{definition}

Nodes that are mapped to an action label are called action nodes.
Nodes that are mapped to a release or claim label are called resource nodes, and
nodes that are mapped to an event are called event nodes.

Constraints (\ref{bullet:claim}-\ref{bullet:mult-events}) extend the existing constraints of the activity framework \cite{sanden2016}.
Also constraints (\ref{bullet:cl-all}) and (\ref{bullet:rel-all}) are strengthened.
We explain why  (\ref{bullet:cl-all}), (\ref{bullet:rel-all}), (\ref{bullet:claim}), and (\ref{bullet:rel}) are needed when we discuss sequencing of activities.
Constraint (\ref{bullet:event-precede}) requires that event nodes are preceded by at least one node so that they depend
on at least one resource. This is because the event has to originate from
some component of the system and the framework uses the concept of resources to model system components.
Constraint (\ref{bullet:mult-events}) requires that multiple instances of the same event in an activity are
sequentially ordered. This ordering is necessary to maintain a deterministic behavior when we process the event instances.
If the instances are not sequentially ordered, the order of processing their outcomes is not well-defined and may lead to non-determinism.

We use an I/O automaton to specify the logistics controller.
An I/O automaton is defined as follows.
\begin{definition} \label{def:ioa}
    (I/O Automaton).
    An I/O automaton \cite{lynch1988introduction} is a seven-tuple ($Q$, $S$, $\Sigma$, $I$, $O$, $\Omega$, $F$) where
    \begin{itemize}
        \item  $\mathit{Q}$ is a finite set of states;
        \item  $\mathit{S} \subseteq \mathit{Q}$ is a set of initial states;
        \item  $\Sigma$ is a finite set of labels and $I$ and $O$ are disjoint sets of input and output labels respectively,
              such that $\mathit{I} \cup O = \Sigma$;
        \item  $\Omega \subseteq \mathit{Q} \times \mathit{I} \times O \times \mathit{Q}$ is a transition relation;
        \item  $\mathit{F}  \subseteq \mathit{Q}$ is a set of final states.
    \end{itemize}
    We write $q_1  \rightarrowtail q_2$ if $(q_1,i,o,q_2) \in \Omega$  for some $i$ and $o$, and denote the transitive closure of $ \rightarrowtail$ by $ \rightarrowtail^+$.
\end{definition}

%The I/O automaton of our model uses pairs of events and their outcomes as input labels and activities as output labels, meaning
%$I = \eventset \times \outcomeset \cup \{\lambda\}$ and $O = \acts \cup \{\epsilon\} $. The empty input label $\lambda$ denotes
%a transition that has no event and outcome pair, meaning that the supervisor does not process an event outcome.
%The empty output label $\epsilon$ denotes a transition where the supervisor does not execute any activity.
With the ingredients of the activity framework defined, we formally define an activity specification as follows.

\begin{definition} \label{def:activity-specification}
    (Activity Specification).
    An activity specification is a five-tuple ($\acts $, $\eventset$, $\outcomeset $, $\gamma$, $\ioa$) where
    \begin{itemize}
        \item  $\acts$ is a finite set of activities with events from $\eventset$;
        \item  $\eventset$ is a finite set of events;
        \item  $\outcomeset$ is a finite set of event outcomes;
        \item  $\gamma \subseteq \eventset \times \outcomeset$ specifies the relation between events and their possible outcomes;
        \item  $\ioa$ is an I/O automaton with pairs of events and their outcomes as input labels and activities as output labels, i.e.,
              $I = \eventset \times \outcomeset \cup \{\lambda\}$ and $O = \acts \cup \{\epsilon\} $ where $\lambda$ is the empty input label denoting
              a transition that has no event and outcome pair, and $\epsilon$ is the empty activity $(\emptyset,\emptyset,\emptyset,\emptyset)$.
    \end{itemize}
\end{definition}

The max-plus semantics of an activity model with event feedback is defined in \cite{mohamadkhani}.

The running example system is modeled as follows.

\begin{itemize}
    \item $\mathbb{A} = \{ a,b,c,d,f,g \}$
    \item $\mathbb{P} = \{ p_1,p_2,p_3,p_4,p_5,p_6  \}$
    \item $\resourseset = \{ r_1,r_2,r_3 \}$
    \item $\acts = \{\mathit{Act}_1,\mathit{Act}_2,\mathit{Act}_3,\mathit{Act}_4  \}$ as given in Figure \ref{fig:activities-1234}
    \item $\eventset = \{e\}$
    \item $\outcomeset = \{u_1,u_2\}$
          %\item $ \sigma = \{(\mathit{Act}_2,e)\}$
    \item $ \gamma = \{(e,u_1),(e,u_2)\}$
    \item I/O automaton $\ioa$ depicted in Figure \ref{fig:example-ioa} %with a set $I=\eventset \times \outcomeset\cup\{\lambda\}$ of input labels representing event  outcomes or the absence of such an outcome (label $\lambda$), and a set $O=\acts \cup \{\epsilon \}$ of    output labels representing activities or the absence of an activity (label $\epsilon$).
\end{itemize}

%The relation  $ \sigma$ specifies that $\mathit{Act}_2$ emits event $e$.
Note that the delay of $e$ is specified
in the DAG of activity  $\mathit{Act}_2$. Figure \nolinebreak\ref{fig:activities-1234} shows that
event $e$ occurs one time unit after the completion of $n_3$.

We assume the automaton does not process an event that
has not been emitted yet by an activity. We further assume that
the automaton has processed all events that are emitted before reaching a final state. This ensures that
the number of events emitted in the system cannot accumulate unlimited.
An I/O automaton is \emph{consistent} (Definition 18 of \cite{mohamadkhani}) if it follows these two assumptions
and does not contain states that do not lead to final states.
 
We further assume the logistics controller to be deterministic and complete:

\begin{definition} \label{def:DIOA}
    (Deterministic I/O Automata).
    A deterministic I/O automaton is an I/O Automaton ($Q$, $S$, $\Sigma$, $I$, $O$, $\Omega$, $F$) with
    \begin{itemize}
        \item exactly one initial state
              ($S=\{s\}$, $s\in{}Q$);
        \item no outgoing transitions from the final states;
              $\nexists_{(q_1,i,o,q_2) \in  \Omega} \:  q_1 \in F   $;
        \item and branches that are only based on different outcomes of the same event;
              $\forall_{d_1, d_2 \in \Omega, q_1,q_2,q_3 \in Q, i_1,i_2 \in I, o_1,o_2 \in O} \:  d_1 = (q_1,i_1,o_1,q_2), \:\:\: d_2 = (q_1,i_2,o_2,q_3),\:\: d_1 \neq d_2  \Rightarrow   \exists_{e \in \eventset, u_1,u_2 \in \outcomeset } \: i_1=(e,u_1), i_2=(e,u_2) \wedge (e,u_1), (e,u_2) \in \gamma  \wedge u_1 \neq u_2$.
    \end{itemize}
\end{definition}

Note that the restriction on branches makes sure that we do not have multiple outgoing transitions of any state with the same label.
Furthermore, it prohibits multiple outgoing transitions from any state with empty input labels ($\lambda$).
A complete I/O automaton is defined as follows.

\begin{definition} \label{def:CompIOA}
    (Complete I/O Automata).
    An I/O automaton is called \emph{complete} if at every branching state in the automaton in which an event is processed,
    there exists exactly one outgoing transition for each outcome of the processed event.
\end{definition}

\begin{figure}[t]
    \centering
    \scalebox{1}{
        \begin{tikzpicture}[shorten >=1pt,node distance=2cm,on grid,every node/.style={scale=0.9}]
            \node[state,initial,initial text=] (q_0) {$q_0$};
            \node[state] (q_1) [above right=1.6cm and 1.4cm of q_0] {$q_1$};
            \node[state] (q_2) [right=2.8cm of q_0] {$q_2$};
            \node[state,accepting] (q_3) [right=2.8cm of q_2] {$q_3$};
            \path[->]
            (q_0) edge [ left] node {$(\lambda,\mathit{Act}_1)$} (q_1)
            (q_1) edge [ right] node {$(\lambda,\mathit{Act}_2)$} (q_2)
            (q_2) edge [ above ] node {$((e,u_2), \mathit{Act}_4)$} (q_3)
            (q_2) edge [ below] node {$((e,u_1), \mathit{Act}_3)$} (q_0)
            ;
        \end{tikzpicture}
    }
    \caption{I/O automaton of the running example ($\ioa$)}
    \label{fig:example-ioa}
\end{figure}
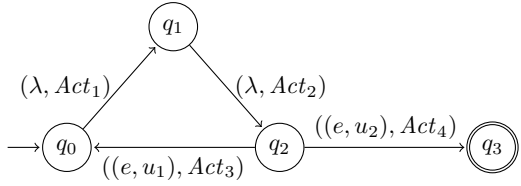

Figure \ref{fig:example-ioa} shows the I/O automaton of our running example.
The transitions are labeled as ((event, event outcome), activity).
The supervisor starts from state $q_0$ and first executes  $ \mathit{Act}_1,\mathit{Act}_2$. Then, reaching state $q_2$,
based on the outcome of event $e$, which is emitted by activity $\mathit{Act}_2$, it decides to
execute either $ \mathit{Act}_4$ in response to $u_2$ and goes to final state $q_3$, or $\mathit{Act}_3$ in response to $u_1$ and returns to $q_0$.
State $q_0$ is the initial state and $q_3$ is the only final state.
Note that nodes of activities that execute after the resolution of event $e$, cannot start until the event outcome is known. Hence, there exists an implicit dependency between those nodes and the node emitting $e$.
This is captured in the max-plus semantics of the model in \cite{mohamadkhani},
and our execution engine takes it into account.

\subsection{System Behavior}

Intuitively, the behavior of a system is defined by
an accepted word in the I/O automaton. It gives a sequence of activities that
the system executes in response to a sequence of event outcomes that are received from the plant.
An activity specification with event feedback
(Definition \ref{def:activity-specification}) and a deterministic I/O automaton (Definition \ref{def:DIOA})
consists of deterministic parts, which are sequences of activities that emit events,
and non-deterministic parts, which are the event outcomes
that are received from the plant and determine the activities that follow.
When an event outcome is received, the sequence of activities corresponding to the outcome is
given by a path in the automaton that starts with a transition that captures that
event outcome as input action.
Therefore, the complete behavior the system exhibits in a particular execution depends on the event outcomes that are received from the plant.

To define system behavior, we first define a word in an I/O automaton as follows.

\begin{definition} \label{def:word-in-ioa}
    (Words and Language of I/O Automata).
    A \emph{word} in an I/O automaton  $\ioa$ is a finite sequence of pairs $(i,o) \in I\times O$.
    The word $w = (i_1,o_1), (i_2,o_2),\dots,(i_j,o_j) \in (I\times O)^* $ is \emph{accepted} by $\ioa$
    if there exists
    a sequence $q_0, q_1, \dots, q_y \in Q^*$ of states such that $q_0 \in S$
    and $(q_{y}, i_{y+1},o_{y+1},q_{y+1}) \in \Omega_{\ioa}$, for all $0\leq y < j$,
    and $q_j \in F$. The language $\mathcal{L}(\ioa)$ is the set of all words accepted by $\ioa$.
\end{definition}

An accepted word in the automaton gives a sequence of activities.
The activity framework provides an operator, called the \emph{sequencing operator} (Definition \ref{def:sequence-op}), to combine
a sequence of activities into a single activity, giving the same order and timing of
action and event nodes. We use this operator to obtain the activity of a behavior (later defined in Definition \nolinebreak\ref{def:spec-behavior}).

Figure \ref{fig:activity-124} shows the results of the sequencing operator
sequencing $\mathit{Act}_{1}$, $\mathit{Act}_{2}$, and then $\mathit{Act}_{4}$, after
processing event $e$ with outcome $u_2$. The final result, activity $\mathit{Act}_{1.2.4}$, corresponds to the accepted word
$(\lambda,\mathit{Act}_1)$,$(\lambda,\mathit{Act}_2)$,$((e,u_2), \mathit{Act}_4)$.
Intuitively, the operator removes the release nodes of the left-hand side activity
and the claim nodes of the right-hand side activity, and adds dependencies from nodes that
precede the removed release nodes to nodes that succeed the corresponding claim node based on
the resource that they claim or release.
These claim and release nodes that are removed are colored red in $\mathit{Act}_{1.2}$ and $\mathit{Act}_{4}$.

Recalling from Definition \ref{def:activity}, we have two constraints to make
sure that (\ref{bullet:claim}) claim nodes have no dependency on any other node and (\ref{bullet:rel}) no nodes are dependent on release nodes.
Looking at Figure \ref{fig:activity-124}, we observe that
when sequencing two activities $\mathit{Act}_{1.2}$ and $\mathit{Act}_4$, dependencies
between nodes from $\mathit{Act}_{1.2}$ that violate (\ref{bullet:rel}), and
dependencies between nodes from $\mathit{Act}_4$ that violate (\ref{bullet:claim}) do not exist in  $\mathit{Act}_{1.2.4}$. The required dependencies are captured by new dependency relations between predecessors of the release nodes of the first activity and successors of the claim nodes of the second.

%The corresponding (event, outcome) pair sequence given by the operator is $(e,u_2)$.
Note the dependencies from $n_5$ to $n_7$, and  from $n_5$ to the release node of $r_3$ in $\mathit{Act}_{1.2.4}$, drawn in red color.
These are added to capture the causality relation due to the outcome of $e$, produced by node $n_5$ and the corresponding sequence of activities that are executed in response (in this case, $\mathit{Act}_4$).
In general, when an event is processed, a dependency is added from the node that emits that event to every
succeeding node of all the claim nodes of the next activity to capture this causality.
This means that all nodes of the sequence that occurs in response to the outcome have to wait until the
outcome is resolved before they can execute.
The operator also enforces Constraint (\ref{bullet:mult-events}) for every event. Therefore,
it adds dependencies from every node that emits an event in the left-hand side activity
to every node that emits that same event in the right-hand side activity. This is not visible in the example, because there only a single event is emitted.
This is facilitated by strengthening Constraints  (\ref{bullet:cl-all}) and (\ref{bullet:rel-all}), compared to \cite{sanden2016}.

\begin{figure}[t]
    \centering
    \begin{tikzpicture}
        \tikzstyle{every state}=[  inner sep=1pt, minimum size=0pt]
        \node (n1) [activity]
        {
            \textbf{$\mathit{Act}_{1.2}$}
            \nodepart[align=left]{two}
            \begin{tikzpicture}[shorten >=1pt,on grid,auto]

                \node[state] (q_0) [fill=white, minimum size=6pt] {};
                \node[action rectangle] (l_0) [ above right=0.45cm and 0.25cm of q_0]{$(r_3,cl)$};
                \node[state] (q_1) [right=0.9 of q_0, fill=turquoise, style={font=\small}, style={draw,ellipse}] {$n_2 (1)$};
                \node[action rectangle] (l_0) [ above right=0.45cm and 0.15cm  of q_1]{$(b,p_2)$};
                \node[state] (q_2) [right=4.1 of q_1, fill=red, minimum size=6pt] {\tikzmark{r3rl}};
                \node[action rectangle] (l_0) [ above left=0.4cm and 0.25cm of q_2]{$(r_3,rl)$};

                \node[state] (q_3) [below=0.8 of q_0,fill=white, minimum size=6pt] {};
                \node[action rectangle] (l_0) [ above right=0.4cm and 0.25cm of q_3]{$(r_1,cl)$};
                \node[state] (q_4) [right=1.4 of q_3, fill=turquoise, style={font=\small}, style={draw,ellipse}] {$n_1 (1)$};
                \node[action rectangle] (l_0) [ above right=0.45cm and 0.25cm of q_4]{$(a,p_1)$};
                \node[state] (q_5) [right=1.35 of q_4, fill=peach , style={font=\small}, style={draw,ellipse}] {$n_3 (1)$};
                \node[action rectangle] (l_0) [ above=0.5 of q_5]{$(c,p_3)$};
                \node[state] (q_6) [right=1.35 of q_5, fill=MyLavender, style={font=\small}, style={draw,ellipse} ] {$n_5 (1)$};
                \node[action rectangle] (l_0) [ above=0.45 of q_6]{$(e)$};
                \node[state] (q_7) [right=0.9 of q_6, fill=red, minimum size=6pt] {\tikzmark{r1rl}};
                \node[action rectangle] (l_0) [ above left=0.4cm and 0.25cm of q_7]{$(r_1,rl)$};

                \node[state] (q_8) [below=0.8 of q_3,fill=white, minimum size=6pt] {};
                \node[action rectangle] (l_0) [ above right=0.4cm and 0.25cm of q_8]{$(r_2,cl)$};
                \node[state] (q_9) [right=1.5 of q_8, fill=peach, style={font=\small}, style={draw,ellipse} ] {$n_4 (2) $};
                \node[action rectangle] (l_0) [ above left=0.4cm and 0.15cm of q_9]{$(d,p_4)$};
                %\node[state] (q_10) [below right =0.8  and 0.8cm of q_5, fill=white, minimum size=6pt] { };
                %\node[action rectangle] (l_0) [ above =0.45cm   of q_10]{$(r_2,rl)$};

                %\node[state] (q_40) [ fill=white, minimum size=6pt] { };
                %\node[action rectangle] (l_0) [ above right=0.45cm and 0.25cm  of q_40]{$(r_2,cl)$};
                \node[state] (q_70) [below=0.8 of q_7, fill=red, minimum size=6pt] {\tikzmark{r2rl}};
                \node[action rectangle] (l_0) [ above left=0.4 and 0.25 of q_70]{$(r_2,rl)$};
                \path[->]
                (q_0) edge   (q_1)
                (q_4) edge   (q_2)
                (q_4) edge   (q_5)
                (q_3) edge   (q_4)
                (q_1) edge   (q_4)
                (q_5) edge  (q_6)
                (q_6) edge  (q_7)
                (q_8) edge  (q_9)
                (q_9) edge  (q_5)
                (q_9) edge  (q_70)
                ;
            \end{tikzpicture}
        };
        \node (n4) [activity, right=0.1 of n1]
        {
            \textbf{$\mathit{Act}_4$}
            \nodepart[align=left]{two}

            \begin{tikzpicture}[shorten >=1pt,on grid,auto]
                \node[state] (q_80) [fill=red, minimum size=6pt] {\tikzmark{r3cl}};
                \node[action rectangle] (l_0) [ above right=0.45cm and 0.25cm  of q_80]{$(r_3,cl)$};

                \node[state] (q_8) [below=0.8 of q_80,fill=red, minimum size=6pt] {\tikzmark{r1cl}};
                \node[action rectangle] (l_0) [ above right=0.4cm and 0.25cm  of q_8]{$(r_1,cl)$};

                \node[state] (q_4) [below=0.8 of q_8, fill=red, minimum size=6pt] {\tikzmark{r2cl}};
                \node[action rectangle] (l_0) [ above right=0.40cm and 0.25cm  of q_4]{$(r_2,cl)$};
                \node[state] (q_5) [right=1.1 of q_4, fill=yellow, style={font=\small}, style={draw,ellipse}] {$n_7(2)$};
                \node[action rectangle] (l_0) [ above=0.45 of q_5]{$(g,p_6)$};
                \node[state] (q_7) [right=1.1 of q_5, fill=white, minimum size=6pt] { };
                \node[action rectangle] (l_0) [ above left=0.4 and 0.25 of q_7]{$(r_2,rl)$};

                \node[state] (q_9) [above=0.8  of q_7, fill=white, minimum size=6pt] { };
                \node[action rectangle] (l_0) [ above left=0.4 and 0.25 of q_9]{$(r_1,rl)$};

                \node[state] (q_90) [above=0.8  of q_9, fill=white, minimum size=6pt] { };
                \node[action rectangle] (l_0) [ above left=0.4 and 0.25 of q_90]{$(r_3,rl)$};
                \path[->]
                (q_4) edge  (q_5)
                (q_5) edge  (q_7)
                (q_8) edge  (q_9)
                (q_80) edge  (q_90)
                ;
            \end{tikzpicture}
        };
        \node (n2) [activity , below right=0.1  and -4.8  of n1]
        {
            \textbf{$\mathit{Act}_{1.2.4}$}
            \nodepart[align=left]{two}
            \begin{tikzpicture}[shorten >=1pt,on grid,auto]

                \node[state] (q_0) [fill=white, minimum size=6pt] {};
                \node[action rectangle] (l_0) [ above right=0.45cm and 0.25cm of q_0]{$(r_3,cl)$};
                \node[state] (q_1) [right=1 of q_0, fill=turquoise, style={font=\small}, style={draw,ellipse}] {$n_2 (1)$};
                \node[action rectangle] (l_0) [ above right=0.45cm and 0.15cm  of q_1]{$(b,p_2)$};
                \node[state] (q_2) [right=4.3 of q_1, fill=white, minimum size=6pt] {};
                \node[action rectangle] (l_0) [ above=0.45cm of q_2]{$(r_3,rl)$};

                \node[state] (q_3) [below=0.8 of q_0,fill=white, minimum size=6pt] {};
                \node[action rectangle] (l_0) [ above right=0.45cm and 0.25cm of q_3]{$(r_1,cl)$};
                \node[state] (q_4) [right=1.9 of q_3, fill=turquoise, style={font=\small}, style={draw,ellipse}] {$n_1 (1)$};
                \node[action rectangle] (l_0) [ above=0.45 of q_4]{$(a,p_1)$};
                \node[state] (q_5) [right=1.35 of q_4, fill=peach , style={font=\small}, style={draw,ellipse}] {$n_3 (1)$};
                \node[action rectangle] (l_0) [ above=0.5 of q_5]{$(c,p_3)$};
                \node[state] (q_6) [right=1.4 of q_5, fill=MyLavender, style={font=\small}, style={draw,ellipse} ] {$n_5 (1)$};
                \node[action rectangle] (l_0) [ above=0.45 of q_6]{$(e)$};
                \node[state] (q_7) [right=1.65 of q_6, fill=white, minimum size=6pt] {};
                \node[action rectangle] (l_0) [ above left=0.45cm and 0.25cm of q_7]{$(r_1,rl)$};

                \node[state] (q_8) [below=0.8 of q_3,fill=white, minimum size=6pt] {};
                \node[action rectangle] (l_0) [ above right=0.45cm and 0.25cm of q_8]{$(r_2,cl)$};
                \node[state] (q_9) [right=1.5 of q_8, fill=peach, style={font=\small}, style={draw,ellipse} ] {$n_4 (2) $};
                \node[action rectangle] (l_0) [ above left=0.45cm and 0.15cm of q_9]{$(d,p_4)$};
                %\node[state] (q_10) [below right =0.8  and 0.8cm of q_5, fill=white, minimum size=6pt] { };
                %\node[action rectangle] (l_0) [ above =0.45cm   of q_10]{$(r_2,rl)$};

                %\node[state] (q_40) [ fill=white, minimum size=6pt] { };
                %\node[action rectangle] (l_0) [ above right=0.45cm and 0.25cm  of q_40]{$(r_2,cl)$};
                \node[state] (q_50) [right=3.8 of q_9, fill=yellow, style={font=\small}, style={draw,ellipse}] {$n_7(2)$};
                \node[action rectangle] (l_0) [ above left =0.30 and 0.9 of q_50]{$(g,p_6)$};
                \node[state] (q_70) [right=1 of q_50, fill=white, minimum size=6pt] { };
                \node[action rectangle] (l_0) [ above left=0.45 and 0.25 of q_70]{$(r_2,rl)$};
                \path[->]
                (q_0) edge   (q_1)
                (q_4) edge   (q_2)
                (q_4) edge   (q_5)
                (q_3) edge   (q_4)
                (q_1) edge   (q_4)
                (q_5) edge  (q_6)
                (q_6) edge  (q_7)
                (q_6) edge [thick, red] (q_50)
                (q_6) edge [thick, red] (q_2)
                (q_8) edge  (q_9)
                (q_9) edge  (q_50)
                (q_9) edge  (q_5)
                (q_50) edge   (q_70)
                ;
            \end{tikzpicture}
        };
    \end{tikzpicture}
    \begin{tikzpicture}[remember picture,overlay]
        \draw[red,densely dotted,->] (pic cs:r1rl) -- (pic cs:r1cl);
        \draw[red,densely dotted,->] (pic cs:r2rl) -- (pic cs:r2cl);
        \draw[red,densely dotted,->] (pic cs:r3rl) -- (pic cs:r3cl);
    \end{tikzpicture}
    \caption{Activities $\mathit{Act}_{1.2}$, $\mathit{Act}_{4}$ and $\mathit{Act}_{1.2.4}$}
    \label{fig:activity-124}
\end{figure}

In general, an activity may emit multiple instances of the same event. When we process an outcome of that event,
for reasons of determinacy, we need to take into account
which instance of the event is being processed.

Now that we have intuitively explained how the sequencing operator works, we define it formally.
To allow multiple instances of the same activity to be sequenced without confusing the nodes in their DAG, we use use superscripts to create unique names for the nodes of the composed activities.
If an event outcome is being processed, the operator also needs to capture the dependencies that arise between the event node of the left-hand activity that emits the event and
the nodes of the right-hand activity that are executed after processing that event.
%Processing of an event is specified in the I/O automaton and must be captured by the sequencing operator.
We modify the sequencing operator of \cite{sanden2016} to be able to capture these dependencies as follows. To simplify the definition, we assume without loss of generality that all activities claim and release all resources by adding, if needed, a claim and release node for a resource without any other nodes in between.

\begin{definition} \label{def:sequence-op}
    Given two activities $A_1= ( \mathit{Nod}_1,\rightarrow_1, \mapfunc_1, \timefunc_1 )$ and
    $A_2= ( \mathit{Nod}_2,\rightarrow_2 , \mapfunc_2, \timefunc_2)$, an event
    $e$ being processed, and $k$ denoting the instance number of $e$ being processed.
    Let the sets of release nodes in $\mathit{Nod}_1$, and claim nodes in $\mathit{Nod}_2$ be
    $\mathit{rl}_{1} = \{ n_1 \in \mathit{Nod}_1  \mid  \exists_{r \in \resourseset} \: \mapfunc_1(n_1 ) =(r,rl)\}$ and
    $\mathit{cl}_{2} = \{ n_2 \in \mathit{Nod}_2  \mid  \exists_{r \in \resourseset} \: \mapfunc_2(n_2 ) =(r,cl)\}$,  respectively.
    We define $A_1 ;_{(e,k)} A_2$ as the activity $A_{1.2}=(\mathit{Nod}_{1.2},\rightarrow_{1.2}, \mapfunc_{1.2}, \timefunc_{1.2}) $, with:
    \begin{align*}
        \mathit{Nod}_{1.2} =   & \{n^1 \mid n \in \mathit{Nod}_1 \wedge n \notin  \mathit{rl}_{1} \} \cup                                            \\
                               & \{n^2 \mid n \in \mathit{Nod}_2 \wedge n \notin \mathit{cl}_{2}\}                                                   \\
        \rightarrow^S_{1.2} =  & \{ (n^1_i , n^1_j) \mid n_i \rightarrow_1 n_j \wedge n_j \notin  \mathit{rl}_{1}\} \cup                             \\
                               & \{ (n^2_i , n^2_j) \mid n_i \rightarrow_2 n_j \wedge n_i \notin  \mathit{cl}_{2}\} \cup                             \\
                               & \{ (n^1_1 , n^2_2) \mid (\exists_{n_{\mathit{rl}} \in \mathit{rl}_{1}} \: n_1 \rightarrow_1 n_{\mathit{rl}}) \wedge \\
                               & \;\;  (\exists_{n_{\mathit{cl}} \in \mathit{cl}_{2}} \: n_{\mathit{cl}} \rightarrow_2 n_2)  \} \cup                 \\
                               & \{ (n^1_1 , n^2_2) \mid \exists_{e' \in \eventset} \mapfunc_1(n_1) = e' \wedge  \mapfunc_2(n_2) = e'\}              \\
        \rightarrow^P_{1.2} =  & \{ (n^1_{e_k} , n^2_j) \mid n_{e_k} \in  \mathit{Nod}_1 \wedge \mapfunc_1(n_{e_k}) = e      \wedge                  \\
                               & \;\;\;   n_j \in  \mathit{Nod}_2 \wedge \exists_{n_{cl} \in \mathit{cl}_{2}} \: n_{cl} \rightarrow_2 n_j \wedge     \\
                               & \;\;\; |\{ n_e \mid \mapfunc_1(n_{e}) = e \wedge n_e \rightarrow_1^+ n_{e_k}\} |=k-1  \}                            \\
        \rightarrow_{1.2} \; = & \:\rightarrow^S_{1.2}  \cup \rightarrow^P_{1.2}                                                                     \\
        \mapfunc_{1.2} =       & \{ (n^1, \mapfunc_1(n)) \mid n \in \mathit{Nod}_1  \wedge n \notin  \mathit{rl}_{1}\} \cup                          \\
                               & \{ (n^2, \mapfunc_2(n)) \mid n \in \mathit{Nod}_2 \wedge n \notin  \mathit{cl}_{2} \}                               \\
        \timefunc_{1.2} =      & \{ (n^1, {\timefunc}_1(n)) \mid n \in \mathit{Nod}_1  \wedge n \notin  \mathit{rl}_{1}\} \cup                       \\
                               & \{ (n^2, {\timefunc}_2(n)) \mid n \in \mathit{Nod}_2 \wedge n \notin  \mathit{cl}_{2} \}                            \\
    \end{align*}
    %where $e_k$ is the $k$th instance of $e$ emitted in  $A_1$.
    If there are no events being processed when sequencing activities $A_1$ and $A_2$, then
    we denote the operation with $A_1 . A_2$, yielding activity  $A_{1.2}=(\mathit{Nod}_{1.2},\rightarrow^S_{1.2}, \mapfunc_{1.2}, \timefunc_{1.2}) $.
    This operator is well-defined for sequencing the empty activity, i.e., for any activity $A$,
    $A . \epsilon = \epsilon . A = A $.
\end{definition}

Note that in the final clause of the dependencies $\rightarrow^S_{1.2}$,
we add dependencies between all pairs of event-emitting nodes from both activities, although adding a single dependency from the last node
that emits the event in the left-hand side activity to the first one in the right-hand side activity would suffice, yielding an equivalent result, since the event emissions within both activities are already ordered. However, this would lead to a more complicated definition.

When processing an event, the sequencing operator uses
the event that is being processed and its instance number ($k$) to capture the causality dependencies due to the event processing, as explained earlier.

We define a function, called the \emph{behavior activity function},
that gives the activity of a behavior using the sequencing operator. % by reading the I/O labels of the word and providing $e$ and $k$ to the operator.
We let $\emptysequence$ be the empty sequence (we use it for empty sequences of different types).

\begin{definition} \label{def:behavior-func}
    (Behavior Activity Function). Given an I/O automaton $\ioa$, %and a word $w \in \mathcal{L}(\ioa)$, for some $e \in \eventset, u \in \outcomeset$,
    the behavior activity function $\behfunc: (I\times O)^* \rightarrow \actsuni$ is recursively defined as
    \begin{align*}
        \begin{cases}
            \behfunc(\emptysequence)=\epsilon           \\
            \behfunc(w(\lambda, o) )=     \behfunc(w).o \\
            \behfunc(w((e,u), o) )= \behfunc(w) ;_{(e,{|w|}^e+1)} o
        \end{cases}
    \end{align*}
    where ${|w|}^e$ denotes the number of times $e$ is processed in $w$.
\end{definition}

We use the behavior activity function and the sequencing operator in our execution to
construct the activity of a behavior (explained later in Definition \ref{def:spec-behavior}).
We describe how it is used for execution in detail in Section \ref{section:executionArch}.

An activity may emit multiple events with their outcomes resolved by the plant. However, the deterministic I/O automaton
only processes a single event at each branching state. This imposes an order in which these
event outcomes are processed which is given by the corresponding word in the automaton.
The sequence of (event, outcome) pairs of a behavior corresponds to the order in which
event outcomes are processed in the accepted word of the I/O automaton.
We provide a function that extracts this sequence from any given accepted word of the automaton as defined below.

\begin{definition} \label{def:proc-event-func}
    (Processed Events Function). Given an I/O automaton $\ioa$, %and a word $w \in \mathcal{L}(\ioa)$, for some $e \in \eventset, u \in \outcomeset$,
    the processed events function $\Vee: (I\times O)^* \rightarrow  (\eventset \times \outcomeset)^*$ is recursively defined as
    \begin{equation*}
        \begin{cases}
            \Vee(\emptysequence)=\emptysequence \\
            \Vee(w(\lambda, o))=    \Vee(w)     \\
            \Vee(w ((e,u),o)) = \Vee(w)  (e,u)
        \end{cases}
        \\ \quad \\
    \end{equation*}
\end{definition}

\vspace{1em}

With the sequencing operator and the language of the automaton defined, we now define
the behavior of a system modeled in the activity framework with event feedback as follows.

\begin{definition} \label{def:spec-behavior}
    (Specified System Behavior).
    Given an activity specification ($\acts $, $\eventset$, $\outcomeset $, $\gamma$, $\ioa$),
    the specified system behavior $P$ is
    a word in $\mathcal{L}(\ioa)$. Such a word gives a sequence
    $\mathit{Act}_1,\mathit{Act}_2,\dots,\mathit{Act}_j$ of activities,
    and a corresponding sequence $U^P = (e_1,u_1),(e_2,u_2), \dots ,(e_i,u_i) \in (\eventset \times \outcomeset)^* $ of (event, outcome) pairs
    that lead to that sequence of activities.
    $\mathit{Act}^P = \behfunc(P)$ is the activity of behavior $P$ (Definition \ref{def:behavior-func}),
    and $U^P = \Vee(P)$ is the sequence of (event, outcome) pairs of $P$
    (Definition \ref{def:proc-event-func}).
\end{definition}

\subsection{System Timing}

With system behavior defined, now we focus on system timing.
Whenever we execute an activity, it takes time for its resources to execute the
individual actions. This means that executing an activity affects the
timing at which system resources are available.
In the activity framework, the state of the system constitutes the availability timing of each of the resources.
While the I/O automaton captures the possible sequences of activities, it does not
capture the timing of system resources at each state. This is determined by the activities. We capture the resource state by a function $\bar{x}$ %(explained in detail in Definition \ref{def:state-vectors})
that maps each resource to the availability time of that resource.
% Each element in $\bar{x}_c$ corresponds to a resource in the system and shows the time at which it becomes available.
% In the case of the running example, $\bar{x}_c$ is a vector of size three since we have three resources.
In the initial state $\bar{x}(0)$, all resources are assumed to be available at time $0$. This state is denoted as $\bar{0}$.

%We assume the starting vector of the system $\bar{x}(0) = \bar{0}$,
%indicating that initially, all resources are available at time $0$.
When we execute an activity, the resource availability function
is updated to represent the state of the system after executing the activity.
We define system timing as the timings of action and event nodes as follows.

\begin{definition}\label{def:system-timing}
    (System Timing).
    Given an activity ($\mathit{Nod}, \rightarrow \nolinebreak , \mapfunc, \timefunc$) and resource availability function $\bar{x}$,
    the specified start time $S(n,\bar{x})$ and
    completion time $C(n,\bar{x})$ of node $n \in \mathit{Nod}$ are defined by
    \begin{align*}
        S(n,\bar{x}) & =
        \begin{dcases}
            \bar{x} (r)                                   & \text{if }   n =(r,cl) \\
            {max}_{n'\in \mathit{Pred}(n)}  C(n',\bar{x}) & \text{otherwise}       \\
        \end{dcases} \\
        C(n,\bar{x}) & = S(n,\bar{x}) + \timefunc(n)
    \end{align*}
    Wherever $\bar{x}$ is known from context, we abbreviate $S(n,\bar{x}) , C(n,\bar{x})$ to  $S(n) , C(n)$.
    For $r\in \mathbb{R}$, we define the functions $S+r$ and $C+r$ as follows
    \begin{align*}
        (S+r)(n) & = S(n)+r \\
        (C+r)(n) & = C(n)+r
    \end{align*}
\end{definition}
\vspace{1em}

System timing for a behavior $P$ (Definition \ref{def:spec-behavior}), denoted by $S^P (\bar{x})$ and $C^P(\bar{x})$, can be computed
using the activity of a behavior and a given resource availability function.
Wherever $\bar{x}$ is known from context, we abbreviate to $S^P$ and $C^P$, respectively.

% Note that the sequencing operator does not affect system timing, meaning if activity $A_1$ leads from state $\bar{x}(1)$ to $\bar{x}(2)$, and
% $A_2$ leads from $\bar{x}(2)$ to $\bar{x}(3)$, then $A_1 . A_2$ leads from $\bar{x}(1)$ to $\bar{x}(3)$ and all start and completion times of nodes are the same.
% The sequencing operator is associative in the state vectors and the start and completion times of the nodes that it gives,
% meaning $(A_1 . A_2) . A_3$ and  $A_1 . (A_2 . A_3)$ lead from $\bar{x}(1)$ to the same vector and all
% start and completion times of nodes are the same.

\section{Execution Relation to Preserve}\label{section:relation}

In this section, we define the relation that we aim to establish between the behavior and timing
prescribed by the specified model and the behavior and timing given by the execution.
We first define the timing relation.
Intuitively, we want our execution to adhere to the timing that the specified model gives.
This means the start times and the completion times of actions and events
in the model must be the same as their counterparts in execution, with some
error bound which we define in this section. We first define notations
for start and completion time of actions and events as well as their duration in the execution.

\begin{definition}\label{def:T-prime}
    (Execution Node Timings).
    Given an activity ($\mathit{Nod},\rightarrow, \mapfunc, \timefunc$), %and node $n \in \mathit{Nod}$,
    for action nodes $\mapfunc(n)\in \mathbb{A} \times \mathbb{P}$, the
    \emph{execution start time}, $S'(n)$,
    is the wall clock time when the action starts on the real peripheral of the FMS,
    the \emph{execution completion time}, $C'(n)$,
    is the wall clock time when the action completes on the real peripheral of the FMS,
    and the \emph{execution duration} is $\timefunc'(n) = C'(n) - S'(n)$.
    For event nodes $\mapfunc(n)\in \eventset$,
    the \emph{event trigger time}, $S'(n)$, is the wall clock time at which the sampling of data starts,
    the \emph{event emission time}, $C'(n)$, is the wall clock time at which the sampling of data completes,
    and the \emph{event delay} is $\timefunc'(n) = C'(n) - S'(n)$.
\end{definition}

The activity framework defines start and completion time for actions, but there is no clearly defined relation
between model time and physical time. Because we deal with physical time, we need
to establish a relation between timing of actions and events in the model and their timing in physical execution.
We do so by first defining the start time of the execution as follows.

\begin{definition}\label{def:Psi}
    (Execution Start Time).
    The start time of the execution, denoted by $\Psi$, is the physical wall clock time, given by the user, at which the execution should start.
\end{definition}

Executing actions in reality has complications that the activity framework abstracts from. This includes
initializing the execution parameters and configuring the controller that is used to execute the model.
It also includes delays introduced by the laws of physics since it takes some time for action execution commands
to travel inside wires and reach the peripheral. Finally, the controller has to take some time to
compute the action and event nodes and find what to perform.

To address the physical delays in wires and the controller computation time, we define the following and show how we deal
with them in Section \ref{section:executionArch}.

\begin{definition}\label{def:exec-delay}
    (Execution Delay).
    Given an activity ($\mathit{Nod},\rightarrow\nolinebreak, \mapfunc, \timefunc$) and assuming $\bar{x}(0) = \bar 0$, for all $n \in \mathit{Nod}$, such that
    $\mapfunc(n)\in \mathbb{A} \times \mathbb{P} \cup \eventset $, the
    \emph{execution delay} of $n$ is
    \begin{equation*}
        \delta(n) = S'(n)-S(n)-\Psi
    \end{equation*}
\end{definition}

In Section \ref{section:executionArch}, we propose a plant model and an execution engine, that together form an
execution architecture. Assuming some inevitable innate delays are bounded, this architecture provides an upper bound $\DA$ on these execution delays such that
\begin{equation}\label{def:exec-delay-bound}
    \forall_{n \in \mathit{Nod}, \mapfunc(n)\in \mathbb{A} \times \mathbb{P} \cup \eventset} \: \delta(n) \leq \DA
\end{equation}
and uses the model time as a lower bound such that
\begin{equation*}
    \forall_{n \in \mathit{Nod}, \mapfunc(n)\in \mathbb{A} \times \mathbb{P} \cup \eventset} \:  S'(n) \geq S(n)+\Psi
\end{equation*}

With these definitions, we define the relation that we want to hold between the timings of the
action and event nodes in the model and their corresponding timings in the execution.

\begin{definition}\label{def:tim-preservation}
    (Timing Relation).
    Given two pairs of start and completion functions $S_1,C_1$ and $S_2,C_2$ all defined over the same domain the set $\mathit{Nod}$ of nodes,
    we denote their execution timing relation as
    \begin{align*}
         & S_1,C_1 \sim S_2,C_2 \iff                                                                      \\
         & \forall_{n \in \mathit{Nod}}    S_1(n) \leq S_2(n) \leq S_1(n) + \DA\wedge  C_2(n) \leq C_1(n)
    \end{align*}
\end{definition}

Now we define the relation that we want to preserve between the system behavior (Definition \ref{def:spec-behavior})
and the corresponding behavior that the execution engine executes.

Recall that a behavior is an accepted word of the I/O automaton and the activity and the sequence of
(event, outcome) pairs that are derived from it.
A deterministic I/O automaton (Definition \nolinebreak\ref{def:DIOA}) is deterministic up to states where non-deterministic event outcomes
give the next deterministic path. The behavior that is executed
is one that corresponds to an accepted word of the I/O automaton and is given by the sequence of
event outcomes resolved at runtime. Note that in a deterministic I/O automaton,
every accepted word corresponds to a unique path in the automaton. % and every sequence of event outcomes gives at most one behavior (some may not correspond to any specified behavior).

To define behavior preservation we first have to note that
the specified behavior of a system (Definition \ref{def:spec-behavior}), when executed,
boils down to action and event nodes that are executed and can be observed in the plant,
the outcomes of the events that were given by the plant, and the order in which the outcomes were processed.
Resource nodes are not observable in the plant and are used as a model abstraction that enable constraints
on the model to ensure correctness of behavior and validity of execution.
Therefore, executed system behavior, and consequently, behavior preservation do not take resource nodes into account.

\begin{definition} \label{def:exec-behavior}
    (Executed System Behavior). %Given an activity specification ($\acts $, $\eventset$, $\outcomeset $, $\gamma$, $\ioa$), let $P$ be a specified behavior of the system with  $\mathit{Act}^P = (\mathit{Nod}_{P},\rightarrow_{P}, \mapfunc_{P}, \timefunc_{P})$ as its corresponding activity.
    The executed system behavior is
    a set $\mathit{Nod}'$ of all action and event nodes
    that are executed and their start and completion times $S', C'$ (Definition \ref{def:T-prime}),  %= {n' \mid n' \in \mathit{Nod}_{P},\mapfunc_{P} \in  \eventset }$
    and a sequence $U'$ of processed (event, outcome) pairs that is received from the plant.
\end{definition}

Intuitively, preserving a behavior is executing exactly all the nodes of its activity while
preserving all the dependencies between all the action and event nodes
and respecting the event outcomes that are resolved at runtime and processing them in the order
that the behavior specifies.
Since the event outcomes are determined at execution time by the plant, the
behavior that the execution engine preserves is also determined at runtime. Because the
I/O automaton is complete (Definition \ref{def:CompIOA}), all possible outcomes
of every event that is emitted and resolved during execution lead to an accepted word, and therefore,
to a specified system behavior.
We formally define behavior preservation as follows.

\begin{definition}\label{def:behavior-preservation}
    (Behavior Preservation).
    Given an activity specification ($\acts $, $\eventset$, $\outcomeset $, $\gamma$, $\ioa$),
    and starting state $\bar{x}$, let $P$
    be a specified behavior with ($\mathit{Nod}_P,\rightarrow_P, \mapfunc_P, \timefunc_P$) its corresponding activity,
    and $U^P$ its sequence of (event, outcome) pairs.
    We say $P$ is preserved, iff all action and event nodes $n \in \mathit{Nod}_{P}$ are executed, 
    no other actions or events are executed,
    the same sequence of event outcomes are received from the plant and processed, and
    for every dependency $n_1 \rightarrow_{P} n_2$, we have $C'(n_1) \leq S'(n_2)$.
\end{definition}

The primary goal of the execution engine and the execution architecture that we present in
Section \ref{section:executionArch} is to preserve any behavior of an activity specification by
respecting the event outcomes that correspond to the behavior, executing exactly
all event and action nodes of the activities that it is composed of,
and preserving node dependencies by preserving the start and completion times of those nodes. In
Section \ref{section:executionArch}, we present the assumptions on the plant, and the assumptions on the execution engine that
together, form the assumptions that we need for the execution architecture.
In Section \ref{subsecion:proof}, we prove that it preserves the timing relation (Definition \ref{def:tim-preservation})
for any behavior, and preservation of node dependencies follows trivially from that. We also explain how
it respects event outcomes and preserves the set of action and event nodes that it executes.

\section{Execution Architecture}\label{section:executionArch}

In this section, we define our execution architecture composed of the supervisory controller and the plant (Figure \nolinebreak\ref{fig:architecture}).
We first describe what we assume as the plant, then we describe the execution engine. Afterward, we show our approach using the running example.
Finally, we discuss the boundedness of the delays associated with executing a model, and we prove how our architecture
preserves the relation between the specification and execution.

\subsection{Plant}
For our execution engine to work properly, we need two assumptions on the plant.

\begin{assumption}\label{assum:actionExecTimeBounds}(Execution Duration Bounds) The plant performs a set of actions with known bounds on their execution durations. \end{assumption}
%\begin{assumption}\label{assum:centralProcessor}(Central Controller) The plant has a central processor controlling all the plant's peripherals. \end{assumption}
\begin{assumption}\label{assum:actionDelayBounds} (Execution Delay Bounds) There is a known bound on the amount of time that the plant takes
    to initiate actions or to process event data, and observe node completions. We denote the \emph{Execution Delay Bound} with $\DA$. \end{assumption}

\subsection{Execution Engine Description and Formalization}

%Figure \ref{fig:architecture} shows our proposed architecture that includes the execution engine and the plant.
The execution engine is configured with an activity specification (Definition \ref{def:activity-specification}).
When configured, it constitutes the supervisory controller. It has three layers each corresponding to the
different layers in the model.

At the top of the execution engine,
we have the Logistics Controller (LC). This layer is configured with the logistics controller specification (the I/O automaton)
and is responsible for executing it. In the middle layer of the engine, we have the Activity Controller
(AC), which is configured with activity specifications and is responsible for the execution
of activities. At the bottom layer of the engine, we have the Action Controller (aC), which
is configured with action and event node specifications and executes them on the plant using a time-triggered approach.

Recall that the model consists of deterministic parts, which are sequences of activities that emit events,
and non-deterministic parts, which are the event outcomes
that are received at runtime from the plant and determine the activities that follow. The execution engine iteratively executes deterministic
parts and processes the non-deterministic event outcomes from the plant.

The deterministic I/O automaton (Definition \ref{def:DIOA}) captures the deterministic parts of
the model with paths through the automaton with all intermediate states, if the path has any, only having one outgoing transition.
These paths may only have repetition for the first and last states, and they
end with a state that is either a final state or a state with multiple outgoing transitions.
The states with multiple outgoing transitions capture the non-deterministic part of the behavior with branches that correspond to
the different outcomes of the same event. So the overall behavior is deterministic up to event outcomes.
The branching states are
where the controller decides on the transition to take based on the event outcome that has been received.
We call these states \emph{decision states}.

\begin{definition}\label{def:decision-state}
    (Decision State).
    Given a deterministic I/O automaton ($Q$, $S$, $\Sigma$, $I$, $O$, $\Omega$, $F$), a \emph{decision state} is
    a state in $Q$ with more than one outgoing transition.
\end{definition}

In the running example (Figure \ref{fig:example-ioa}), state $q_2$ is a decision state.
The sequences that we execute correspond to a \emph{path} in the logistics automaton.

\begin{definition}\label{def:path}
    (Path). Given an I/O automaton ($Q$, $S$, $\Sigma$, $I$, $O$, $\Omega$, $F$), with $q_k \in Q$, $i_l \in I$, and $o_l \in O$,
    for $0\leq k\leq n$ and $1\leq l\leq n$, a \emph{path} is a sequence of
    4-tuples ${\rho} = (q_0,i_1,o_1,q_1),(q_1,i_2,o_2,q_2),\dots,(q_{n-1},i_{n},o_n,q_n)$ such that
    $(q_{j-1},i_{j},o_{j}, q_{j}) \in \Omega $ for each $1 \leq j \leq n$.% $(q_0,i_1,o_1,q_1),(q_1,i_2,o_2,q_2),\dots,(q_{n-1},i_{n},o_n,q_n) \in \Omega$.
\end{definition}

The execution engine divides any path in the automaton into its constituent
deterministic parts called \emph{decision paths}.

\begin{definition}\label{def:decisionPath}
    (Decision Path).
    Given a deterministic I/O automaton ($Q$, $S$, $\Sigma$, $I$, $O$, $\Omega$, $F$),
    a path $(q_0,i_1,o_1,q_1),(q_1,i_2,o_2,q_2),\dots,(q_{n-1},i_{n},o_n,q_n)$ is a \emph{decision path}
    if $q_0$ is either an initial state or a decision state, and $q_n$ is either a final state
    ($q \in F$) or a decision state, and $q_1,\dots,q_{n-1}$ are not decision states.
\end{definition}

The decision paths of the running example are
${\rho}_1 = (q_0,\lambda, \mathit{Act}_1,q_1), $ $(q_1,\lambda,\mathit{Act}_2,q_2)$,
${\rho}_2 = (q_2,(e, u_1 ), \mathit{Act}_3,q_0),$ $ (q_0,\lambda, \mathit{Act}_1,q_1)$, $(q_1,\lambda,\mathit{Act}_2,q_2)$, and
${\rho}_3 = $ $(q_2,(e, u_2 ),\mathit{Act}_4,q_3)$.

Before the execution of a model starts, we have an initialization phase where the
execution engine is configured with the model. It starts when the LC
reads the transition labels of the I/O automaton starting from the initial state until it reaches a decision state or
a final state and then forms a decision path ${\rho}$.
The LC then sends the sequence of activity labels corresponding to ${\rho}$ to the AC.
We assume that logistics automata do not have states without paths to a final state.
Such states are redundant in specifying behavior, and they complicate model-driven execution.

\begin{assumption}\label{assum:blockingStates}
    (No Blocking States).
    All non-final states in the logistics automaton have a path to a final state. %$\forall_{ q_{k} \in Q ,q_{k} \notin F , i_k \in I, o_k \in O,  1\leq k \leq n} $ $\:   \exists_{(q_k,i_{k+1},o_{k+1},q_{k+1}),  \\ \dots, (q_{n-1},i_{n},o_n,q_n) \in \Omega} \:  q_n \in F $  }
\end{assumption}

%Based on Assumption \ref{assum:Psi}, the initialization occurs before $\Psi$ and does not introduce any delay in the execution.

When the AC receives the
sequence of activity labels from the LC, it sequences them (based on Definition \nolinebreak\ref{def:sequence-op}).
The AC computes the start and completion time of every node based on the availability of resources and the
completion time of the event node that they depend on (given by the behavior activity function, Definition \ref{def:behavior-func}).
Then, the AC sends the labels of the action nodes and the event nodes of the resulting activity and their specified start and completion times to the aC
to be executed in a time-triggered manner. The initialization phase ends when the aC has received and prepared its first set of nodes and is ready to execute them.
To address the delay of the initialization phase, we assume the following.

\begin{assumption}\label{assum:Psi}
    (Execution Start Time).
    $\Psi$, is a moment in time far enough in the future that allows the initialization to take place before it.
\end{assumption}

Another assumption of our approach is that the execution duration of every action or the delay of every event
is at most equal to its specified duration (for actions) and specified delay (for events) in the model (including the execution delay upper bound $\DA$).
This ensures that the time-triggered execution of the aC keeps the timing and order of actions and events prescribed by the specification.
Formally, we write this assumption as follows.

\begin{assumption} \label{assum:T-prime-n} (Conservative Node Durations)
    $\forall_{ \mathit{Act} \in \acts , n \in \mathit{Nod}_\mathit{Act}} \: \timefunc(n) \geq \timefunc'(n) + \DA$
\end{assumption}

After the aC receives the node labels and their specified start and completion times from the AC,
it sorts them based on their intended start time $S(n)+ \Psi$.
Then, it compares its clock
$\psi$ with the intended start time of each node it has received from the AC, and
executes each action or event node when its time arrives ($\psi = S(n) + \Psi$).
To execute event nodes, the aC translates data from the plant
to event outcomes and sends those to the AC.
The aC also observes the execution of action and event nodes on the plant to check whether they complete
within their model-specified time (Assumption \ref{assum:T-prime-n}).
If an action or event node (in violation of the assumptions) does not complete before its specified completion time plus $\Psi$, appropriate action can be taken,
such as warning the user or logging the violation.

When the AC receives an event outcome from the aC, it notifies the LC of the outcome of the event immediately.
Then, the LC traverses the I/O automaton to the next decision state and sends the
corresponding sequence of activity labels to the AC, as well as the event being processed
and its instance ($e$ and $k$ of the sequencing operator, Definition \ref{def:sequence-op}).
These steps repeat until we reach a final state at which point the execution terminates.

Considering the delays that impact the execution times,
we let $\DE$ (event processing time) denote the maximum time that it takes for the
execution engine to receive an event outcome, find the next decision path, sequence the corresponding
activities, and prepare to execute the action and event nodes of that activity. Hence, $\DE$ denotes the maximum time required to process events.

To provide enough time for event processing,
we assume that the specified delay of an event ($\timefunc(e)$)
is larger than the time that it takes to resolve ($\timefunc'(e)$, Definition \ref{def:T-prime})
and process ($\DE$) the outcome. If a specification does not satisfy this assumption,
the controller will not have enough time to process the outcome, leading to the execution being delayed.
Recalling $\DA$ from Equation \ref{def:exec-delay-bound}, we formalize this assumption as follows.

\begin{assumption}\label{assum:T_e} (Conservative Event Delays)
    $\forall_{\mathit{Act} \in \acts , n \in \mathit{Nod}_\mathit{Act},  \mapfunc(n)\in \eventset} \: \timefunc(n) \geq \timefunc'(n) + \DE+ \DA$
\end{assumption}

\subsection{Example Execution}

Consider the running example again.
During the initialization phase, the LC is configured with $\ioa$ (Figure \ref{fig:example-ioa})
and the AC is configured with the activities (Figure \ref{fig:activities-1234}).
The LC starts from the initial state $q_0$ and traverses the automaton until it reaches a decision state.
Since $q_1$ has only one outgoing transition and is not a final state, LC continues to $q_2$ where there are multiple outgoing transitions with
event outcomes. Therefore, $q_2$ is a decision state where the LC stops traversing the automaton and sends the
activity-label sequence corresponding to ${\rho}_1$, $\mathit{Act}_{1},\mathit{Act}_{2}$ to the AC.
%Since no event is being processed prior to this sequence, no event and instance are provided to the AC.

At this point, the AC obtains $\mathit{Act}_{1.2}$ using the sequencing operator and sends
the labels of the action  and event nodes and their start and completion times to the aC.
The aC now has the set $\{ (n_4, \Psi, \Psi+2), (n_2, \Psi, \Psi+1), (n_1, \Psi+1, \Psi+2), (n_3, \Psi+2, \Psi+3), (n_5, \Psi+3, \Psi+4)\}$ of
action and event nodes to execute. The aC sorts these nodes based on their intended start time $S(n)+ \Psi$.
When the clock of the aC $\psi$ reaches $\Psi$, it executes $n_4$ and $n_2$.
When $\psi = \Psi+1$, it executes $n_1$, and When $\psi = \Psi+2$, it executes $n_3$.
When $\psi = \Psi+3$, it samples the sensor corresponding to the event $e$, translates the data to
an outcome of $e$, and sends it back to the AC.

When the aC notifies the AC of the outcome of event $e$ and the node that gave it, the AC notifies the LC of the received
outcome and the LC traverses the corresponding path in the automaton to obtain the next
activity-label sequence to send to the AC.
This is given by the outgoing transitions of $q_2$ that process event $e$ emitted by $\mathit{Act}_2$ with
two possible outcomes, $u_1$ and $u_2$.
If we assume that the outcome $u_1$ is received, the LC sends the
activity-label sequence corresponding to decision path
${\rho}_2$. This brings the execution to a repeating sequence of activities ($\mathit{Act}_3, \mathit{Act}_1, \mathit{Act}_2$) being executed that
lasts as long as the outcome of $e$ is $u_1$ every time it is emitted.
If  $u_2$ is received, the LC sends the activity sequence corresponding to decision path
${\rho}_3$. In this case, the activity $\mathit{Act}_{4}$ is executed after which the execution halts.
Note that the AC is made aware by the aC what node emitted the event $e$. Therefore, it has the necessary information
to correctly sequence the following activities with their dependencies to the event node (depicted in red in Figure \ref{fig:activity-124}).

Figure \ref{fig:execution-act123} shows the execution of the decision path ${\rho}_1$, receiving event outcome
$u_1$, and consequently executing ${\rho}_2$ (for brevity, ${\rho}_2$ is only partially depicted).
There are three rows corresponding to system resources $r_1,r_2,r_3$.
Solid blocks on the upper side of each row depict model durations of action and event nodes,
while hatched ones on the lower side represent execution durations.
Solid blocks on the lower side at the start and end of each hatched block represent
execution delays which are together bounded by $\DA$.
Based on Assumption \ref{assum:T_e}, the aC sends the outcome
of event $e$ at time $C'(e)$ which occurs before $\Psi +4- \DE$ at latest.
Therefore, the execution engine has enough time to send the nodes of ${\rho}_2$ to the aC to be executed.
Thus, we hide $\DE$ and prevent it from causing any delay in the execution of nodes.
Note that nodes $n_4$ of  $\mathit{Act}_{2}$ and $n_6$ of $\mathit{Act}_{3}$ execute after the completion time of event node $e$
because they cannot start before $e$ is resolved and therefore, there is an implicit dependency between them.

\begin{figure}[t!]
    \centering
    \includegraphics[clip,width=0.8\columnwidth]{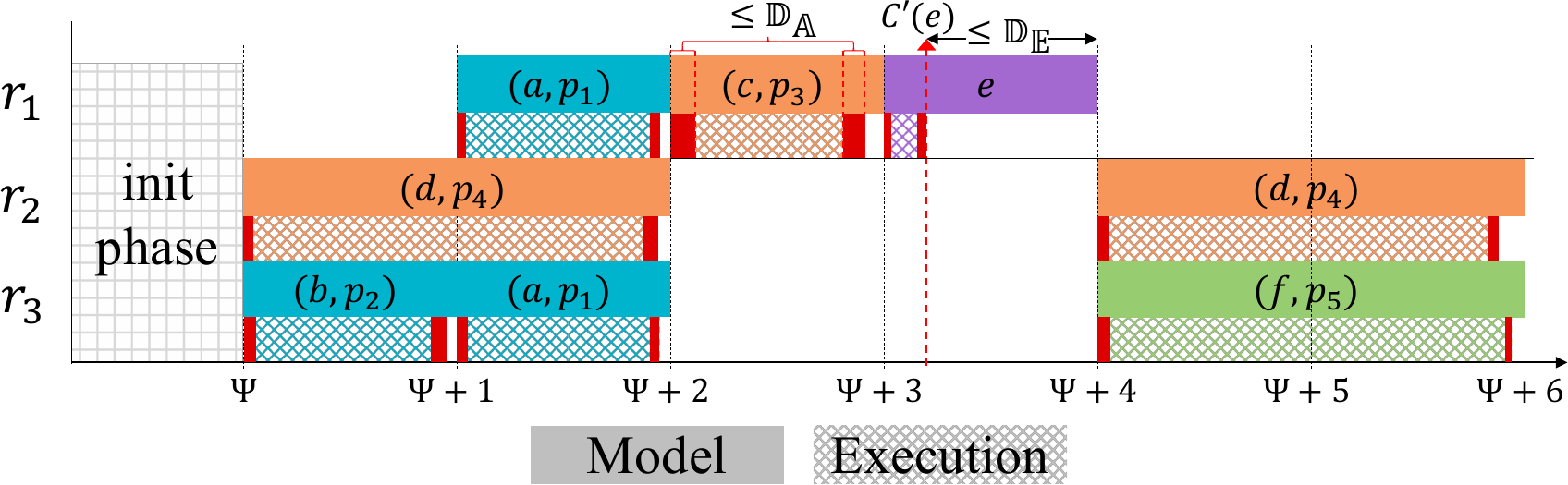}
    \caption{Partial execution Gantt chart of ${\rho}_1$,${\rho}_2$}
    \label{fig:execution-act123}
\end{figure}

\subsection{Delay Bounds}\label{subsection:delay-bounds}

Our architecture has delays that are bounded by $\DE$ and $\DA$.
Since the existence of bound $\DA$ is an assumed property of the plant, we determine the value of bound $\DE$.
We start by showing that our restrictions on the I/O automaton and activities ensure that the number of activities and
the number of nodes to be processed and executed in each iteration of the execution engine have a known bound.
Then, we show that the timing relation (Definition \ref{def:tim-preservation}) holds (Proposition \nolinebreak\ref{proposition:timing-rel}).

The I/O automaton is deterministic with finite sets of states, input labels and output labels.
If it does not contain any cycles, then the number of decision paths
that the LC reads is finite. If it does have cycles, then since a cycle cannot contain any final state,
and based on Assumption \ref{assum:blockingStates}, it must have a
branch that leads to a final state. Therefore, the state with
that branch is a decision state where the LC stops reading the
activity labels of that cycle and sends the decision path that it has obtained so far to the AC.
Therefore, cycles do not cause the specification to have an infinite number of decision paths.
This also makes the number of activity labels given by a decision path and sent to the AC finite
and bounded by the longest decision path which can be determined
from the specified I/O automaton.
Because an activity has a finite number of nodes, the number of nodes
to be executed at any time during the execution is finite.
This number is bounded by the largest number of nodes in any decision path.
Since when processing an event, the execution engine is dealing with a
bounded number of these bounded ingredients of the specification, then the time to
process an event must have a bound, which we call $\DE$.
$\DE$ may be computed using the following equation:
\begin{equation*}
    \DE = \Dee_\mathit{Event} + \DLC + \DAC + \DaC
\end{equation*}
where $\Dee_\mathit{Event}$ is the maximum time that it takes for an event outcome to be sent from
aC to the AC and finally to LC, $\DLC$ is the maximum time that the LC takes to read the largest decision path in the automaton
and find the activity label sequence and send it to the AC, $\DAC$ is the maximum time that the
AC takes to sequence the largest activity sequence and find node start and completion times and send them to the aC, and $\DaC$
is the maximum time that the aC takes to prepare the largest number of nodes in any decision path that it has received.
$\DLC, \DAC,$ and $\DaC$ depend on the size of the specification, while clearly,
all these values,  including $\Dee_\mathit{Event}$, depend on the speed of the processing platform and the implementation.

The LC performs a number of operations to read transition labels of a decision path and send them to the AC
that grows linearly with the number of transitions.
Let  $L(\rho)$ be the length (the number of transitions) of a decision path $\rho$ in the I/O automaton.
The complexity of the LC is ${\cal{}O}(L(\rho))$.
The AC performs the most complex operations of the entire engine.
Based on the activity labels received from the LC, it finds the DAG of each activity,
which depends on $L(\rho)$.
It then sequences all activities of $\rho$. The sequencing scales with the number of resources of the activities
we sequence, $\mathcal{R}(\mathit{Act})$, $\mathit{Act} \in \acts$, and the
number of nodes in them, see $\rightarrow_{1.2}$ in Definition \ref{def:sequence-op}.
The AC then sends node information to the aC, which is of complexity ${\cal{}O}( L(\rho) )$. Therefore,
$\DAC = {\cal{}O}(L(\rho) \times |\acts| + L(\rho) \times \mathcal{R}(\mathit{Act})^2 \times | \mathit{Nod}_{\mathit{Act}}|^2 )$.
Finally, for aC, we have $\DaC = {\cal{}O}( N(\rho) \log(N(\rho))  )$ where $N(\rho) $ is the number of nodes in $\rho$,
since it sorts the actions that it receives.

$\DE$ can be computed if the processing platform gives
hard real-time guarantees on its timings. For instance,
by measuring how long the LC takes to perform its operations for every transition, including reading and sending,
we can compute the time that it needs for every decision path. Using the longest decision path in the automaton,
we can compute the bound $\DLC$. If we use a processing platform without timing guarantees, $\DE$ may be
determined through measurements.

%Note that since the I/O automaton is \emph{consistent}, the number of events being emitted cannot grow infinitely large, and is bounded by the largest number of emitted events at any state.
%, and the bound can be computed from the specification. Except for the initial decision path that starts from the initial state, every other decision path starts by processing exactly one event outcome. This means that in a consistent I/O automaton, the number of events being emitted and processed is smaller than the number of decision paths.

\subsection{Time- and Behavior-Preservation Proof}\label{subsecion:proof}

Recall that the aC implements a time-triggered execution. For time-triggered executions in general,
there are two cases to be made for execution start time
of every node. Either it has received the node and is prepared to execute it when its time arrives, or
it receives the node after its time arrives.
We denote the time at which the aC has all nodes of a decision path $\rho$ and is ready to execute them
by ${\timefunc^{aC}}(\rho)$ and formulate the above argument as follows.
\begin{equation} \label{eq:max-of-sn}
    \forall_{n \in \mathit{Nod}_{\rho}} \: S'(n) = \mathit{max} (S(n)+\Psi + \delta(n), {\timefunc^{aC}}({\rho}))
\end{equation}

\begin{figure}[t]
    \centering
    \includegraphics[clip,width=0.4\columnwidth]{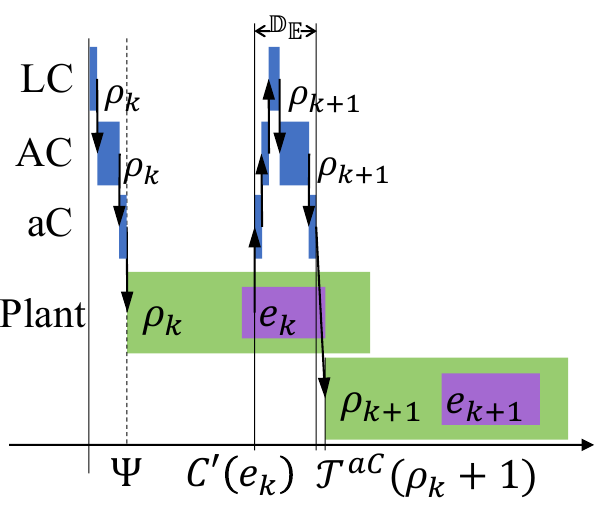}
    \caption{Execution chart of a sequence of decision paths}
    \label{fig:proposition}
\end{figure}

To formally prove that our execution engine satisfies the timing relation (\ref{def:tim-preservation})
between model and execution, we first need to introduce a few notations in relation
to the timing aspects of decision paths.
Since decision paths give sequences of activities and activities have timing
properties, we can derive the timing properties of a decision path
based on the timing of its activity sequence, given a starting state.
For that, we first introduce the start time of an activity.

\begin{definition}\label{def:start-time-of-activity}
    (Specified Start Time of an Activity).
    Given activity $A=(\mathit{Nod}_A,\rightarrow_A, \mapfunc_A, \timefunc_A)$ and
    resource availability function $\bar{x}$,
    the specified start time of activity $A$ is defined as
    \begin{equation*}
        S(A,\bar{x}) = \begin{cases}
            \mathit{min}_{i}\: \bar{x}_i,                           & \mathit{Nod}_{A} = \emptyset    \\
            \mathit{min}_{n \in \mathit{Nod}_{A} }\: S(n,\bar{x}) , & \mathit{Nod}_{A} \neq \emptyset
        \end{cases}
    \end{equation*}
    Wherever $\bar{x}$ is known from context, we abbreviate $S(A,\bar{x})$ to  $S(A) $.
\end{definition}

Based on the start time of the activity sequence of a decision path,
we define the start time of a decision path as follows.

\begin{definition}\label{def:start-time-of-dec-path}
    (Specified Start Time of a Decision Path).
    Given decision path
    ${\rho} = (q_0,i_1,\mathit{Act}_1,q_1),(q_1,i_2,\mathit{Act}_2,q_2),\dots,(q_{n-1},i_{n},\mathit{Act}_n,q_n)$ and
    resource availability function $\bar{x}$,
    the specified start time of decision path ${\rho}$ is defined as
    \begin{equation*}
        S({\rho},\bar{x}) =  S(\mathit{Act}_{1.2. \:\dots\: .n},\bar{x})
    \end{equation*}
    (with $\mathit{Act}_{1.2. \:\dots\: .n}$ defined in Definition \ref{def:sequence-op}).
    Wherever $\bar{x}$ is known from context, we abbreviate $S({\rho},\bar{x})$ to  $S({\rho}) $.
    We refer to start and completion times of action nodes in ${\rho}$ with
    $S^{\rho}$ and $C^{\rho}$, respectively.
\end{definition}

Given the notations above, now we prove that the execution engine satisfies the timing relation (Definition \ref{def:tim-preservation}).
We start by introducing a lemma that shows given our set of assumptions, the execution engine satisfies the
execution relation for any path in the I/O automaton. %We do this using natural induction on the length of the decision paths that constitute the path in the I/O automaton.
Then, since any specified behavior is a path made of a sequence of decision paths in the automaton,
we prove that the execution relation is kept by our execution engine for any activity specification with event feedback that satisfies our assumptions.
We use Figure \ref{fig:proposition} in our proof, which depicts the Gantt chart of a specification.

\begin{lemma} \label{lemma:induction}
    For any path ${\rho}={\rho}_1,{\rho}_2,\dots,{\rho}_i$ in an I/O automaton where ${\rho}_j$ are decision paths,
    for all $k \in \mathbb{N}, 1 \leq k \leq i$, we have $S^{{\rho}_{k}}+\Psi, C^{{\rho}_{k}}+\Psi \sim S'^{{\rho}_{k}}, C'^{{\rho}_{k}}$.
\end{lemma}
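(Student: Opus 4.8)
The plan is to prove the lemma by induction on the decision-path index $k$, which mirrors the way the engine processes the path $\rho$: the timing of $\rho_k$ is determined by $\rho_{k-1}$ only through the carried-forward resource availability and the completion time of the event that triggers $\rho_k$. The single tool driving every estimate is Equation~(\ref{eq:max-of-sn}), $S'(n)=\mathit{max}(S(n)+\Psi+\delta(n),\timefunc^{aC}(\rho_k))$. Unfolding Definition~\ref{def:tim-preservation}, for each action or event node $n$ of $\rho_k$ I must establish $S^{\rho_k}(n)+\Psi\le S'^{\rho_k}(n)\le S^{\rho_k}(n)+\Psi+\DA$ and $C'^{\rho_k}(n)\le C^{\rho_k}(n)+\Psi$. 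In both the base case and the inductive step I follow the same template: first show that the second argument of the $\mathit{max}$ never wins, i.e.\ $\timefunc^{aC}(\rho_k)\le S^{\rho_k}(n)+\Psi$, so that $S'^{\rho_k}(n)=S^{\rho_k}(n)+\Psi+\delta(n)$; the start-time bounds then follow from $0\le\delta(n)\le\DA$ (Equation~(\ref{def:exec-delay-bound}) together with the architecture's lower bound $S'(n)\ge S(n)+\Psi$), and the completion-time bound from $C'(n)=S'(n)+\timefunc'(n)$ combined with $\delta(n)+\timefunc'(n)\le\timefunc(n)$, which is Assumption~\ref{assum:T-prime-n} for action nodes and Assumption~\ref{assum:T_e} for event nodes.

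For the base case $k=1$, Assumption~\ref{assum:Psi} ensures the initialization completes before $\Psi$, so the aC holds all nodes of $\rho_1$ by time $\Psi$, giving $\timefunc^{aC}(\rho_1)\le\Psi$. As $\rho_1$ is evaluated from the initial state $\bar 0$, every model start time is non-negative, hence $S^{\rho_1}(n)+\Psi\ge\Psi\ge\timefunc^{aC}(\rho_1)$; the $\mathit{max}$ collapses to its first argument and the two bounds follow from the template.

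The inductive step is the crux and the part I expect to be the main obstacle. Assuming the relation for $\rho_1,\dots,\rho_{k-1}$, I again need $\timefunc^{aC}(\rho_k)\le S^{\rho_k}(n)+\Psi$. Let $n_e$ be the node that emits the event $e$ during $\rho_{k-1}$ whose outcome triggers $\rho_k$. The engine can only be ready with the nodes of $\rho_k$ after it has received and processed the outcome of $e$, so $\timefunc^{aC}(\rho_k)\le C'^{\rho_{k-1}}(n_e)+\DE$. Writing $C'^{\rho_{k-1}}(n_e)=S'^{\rho_{k-1}}(n_e)+\timefunc'(n_e)$, applying the inductive start-time bound $S'^{\rho_{k-1}}(n_e)\le S^{\rho_{k-1}}(n_e)+\Psi+\DA$, and using Assumption~\ref{assum:T_e} in the form $\timefunc'(n_e)+\DA+\DE\le\timefunc(n_e)$, the slack in the specified event delay absorbs both $\DA$ and $\DE$ and yields $\timefunc^{aC}(\rho_k)\le C^{\rho_{k-1}}(n_e)+\Psi$. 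It remains to relate $C^{\rho_{k-1}}(n_e)$ to $S^{\rho_k}(n)$: the operator $;_{(e,k)}$ inserts a dependency from $n_e$ to the immediate successors of the claim nodes of $\rho_k$'s first activity (Definition~\ref{def:sequence-op}); since every action node descends from a claim node (Constraint~IV of Definition~\ref{def:activity}), every event node is reachable from one (Constraint~X), and the remaining activities of $\rho_k$ chain on through the shared resources, one gets $n_e\rightarrow^+n$ for every node $n$ of $\rho_k$ and hence $S^{\rho_k}(n)\ge C^{\rho_{k-1}}(n_e)$. Chaining the two inequalities gives $\timefunc^{aC}(\rho_k)\le S^{\rho_k}(n)+\Psi$, and the induction closes as in the base case.

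The points I would treat most carefully are, first, the bookkeeping that $S^{\rho_k}$ and $C^{\rho_k}$ are read off one global model timeline whose resource-availability state is propagated from $\rho_{k-1}$, so that the cross-boundary inequality $S^{\rho_k}(n)\ge C^{\rho_{k-1}}(n_e)$ is well defined; and second, the single use of Assumption~\ref{assum:T_e}, which supplies precisely the slack needed to hide the event-processing latency $\DE$ and is the only place the argument breaks if an event delay is specified too tightly. Everything else reduces to the elementary inequality manipulation of the template.
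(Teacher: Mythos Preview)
Your proposal is correct and follows essentially the same approach as the paper: induction on the decision-path index, with the key step in both the base case and the inductive step being to show $\timefunc^{aC}(\rho_k)\le S^{\rho_k}(n)+\Psi$ so that the $\max$ in Equation~(\ref{eq:max-of-sn}) collapses, after which the start-time bound comes from the execution-delay bound and the completion-time bound from Assumption~\ref{assum:T-prime-n} (respectively Assumption~\ref{assum:T_e}). Your inductive step is slightly more explicit than the paper's---you spell out via Definition~\ref{def:sequence-op} and Constraints~IV and~X why $C^{\rho_{k-1}}(n_e)\le S^{\rho_k}(n)$ holds for every node, where the paper simply asserts $C(e_k)\le S(\rho_{k+1})$---but the argument is the same.
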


\begin{proof}
    We break down the relation into its constituent inequalities
    according to Definition \ref{def:tim-preservation} and prove them individually.
    For all $n \in \mathit{Nod}_{{\rho}_k}$ and assuming starting state $\bar{x}_0$, we prove
    \begin{align}
        S(n,\bar{x}_0)+\Psi & \leq S'(n,\bar{x}_0)  \label{eq:proof1}           \\
        S'(n,\bar{x}_0)     & \leq S(n,\bar{x}_0) + \Psi + \DA\label{eq:proof2} \\
        C'(n,\bar{x}_0)     & \leq C(n,\bar{x}_0) + \Psi \label{eq:proof3}
    \end{align}

    The aC uses a time-triggered execution mechanism that, by design, does not execute any action too early.
    Therefore, Equation \ref{eq:proof1} follows trivially from Equation \ref{eq:max-of-sn}.
    %Formally, this means that $ \forall_{n \in \mathit{Nod}_\mathit{Act}, \mathit{Act} \in \acts} \:  S(n,\bar{x})+\Psi  \leq S'(n,\bar{x}) $, for any starting vector $\bar{x}$.

    We prove Equations \ref{eq:proof2} and \ref{eq:proof3} by induction on the number of decision paths in ${\rho}$.
    For the base case $k=1$, the execution starts after all three layers
    have performed their tasks on ${\rho}_1$ and by definition, we have $\Psi = S'({\rho}_1)$.
    The aC has all nodes $n \in \mathit{Nod}_{{\rho}_1}$ that it needs to execute
    at time $\Psi$. This means that $\timefunc^{aC}({\rho}_1) \leq S(n) + \Psi$ for all $n \in \mathit{Nod}_{{\rho}_1}$.
    Based on the principles of time-triggered execution (Equation \ref{eq:max-of-sn}) and the upper bound on execution delays
    (Assumption \ref{assum:actionDelayBounds}), %and that $0\leq \delta(n) \leq \DA$,
    we get  (for brevity, omitting $\bar{x}_0$ henceforth)
    \begin{equation*}
        %S'(n) = S(n) + \Psi + \delta(n) \leq S(n) + \Psi + \DA
        S'(n) \leq S(n) + \Psi + \DA
    \end{equation*}
    Therefore, Equation \ref{eq:proof2} holds for $k=1$.
    Because node durations are conservative (Assumption \ref{assum:T-prime-n}) and
    they start execution within a given bound from their model time (Equation \ref{eq:proof2}),
    we add these two inequalities together to prove Equation \ref{eq:proof3}.

    \begin{equation*}
        S'(n) + \timefunc'(n)  + \cancel{\DA} \leq S(n) + \Psi + \cancel{\DA} + \timefunc(n)
    \end{equation*}
    By replacing the time durations $ \timefunc(n) $  and $ \timefunc'(n) $ by their start and completion times
    (Definition \ref{def:system-timing},\ref{def:T-prime}), we get
    \begin{equation*}
        \cancel{ S'(n)} + C'(n)  -\cancel{ S'(n)}    \leq \cancel{ S(n)}  + \Psi + C(n) -\cancel{ S(n)}
    \end{equation*}
    This means that Equation \ref{eq:proof3} also holds for $k=1$.

    Our induction hypothesis is that Equation \ref{eq:proof2} and Equation \ref{eq:proof3}
    hold for path $k$, and we prove the relation for the induction step to path $k+1$.
    Let $e_k$ be the event emitted by ${\rho}_k$, the outcome of which determines the next decision path.
    From the induction hypothesis, we know that $ S'(e_k) \leq S(e_k) + \Psi + \DA$.
    Since event delays are conservative, by adding Assumption \ref{assum:T_e} to this equation, we get
    \begin{equation*}
        S'(e_k) + \timefunc'(e_k) + \cancel{\DA} + \DE \leq S(e_k) + \Psi + \cancel{\DA} +\timefunc(e_k)
    \end{equation*}
    We replace the time durations $ \timefunc(n) $  and $ \timefunc'(n) $ by their start and completion times
    (Definition \ref{def:system-timing},\ref{def:T-prime}), and we get
    \begin{equation*}
        \cancel{ S'(e_k)} + C'(e_k) - \cancel{S'(e_k)}+  \DE \leq \cancel{S(e_k)} + C(e_k) - \cancel{S(e_k)}+ \Psi
    \end{equation*}
    Because $C(e_k) \leq S({\rho}_{k+1})$, and $C'(e_k) +  \DE=\timefunc^{aC}({\rho}_{k+1})$, we can write
    \begin{equation*}
        {\timefunc^{aC}}({\rho}_{k+1}) \leq S({\rho}_{k+1}) + \Psi
    \end{equation*}
    %Since $0\leq \delta(n) \leq \DA$, and
    From Definition \ref{def:start-time-of-dec-path}, we conclude that
    \begin{equation*}
        %\forall_{n \in \mathit{Nod}_\mathit{{\rho}_{k+1}} } \:  S'(n) = S(n) + \Psi +  \delta(n)    \leq S(n) + \Psi + \DA
        \forall_{n \in \mathit{Nod}_\mathit{{\rho}_{k+1}} } \:  S'(n)  \leq S(n) + \Psi + \DA
    \end{equation*}
    This means that Equation \ref{eq:proof2} holds for $k+1$.
    The proof for Equation \ref{eq:proof3} holding for $k+1$ follows the same identities as for the base case.
    We conclude that Equations \ref{eq:proof2} and \ref{eq:proof3} hold for case $k+1$.
    This concludes the induction proof, and consequently, the complete proof.
\end{proof}

\begin{proposition} \label{proposition:timing-rel}
    For any activity specification with event feedback (Definition \ref{def:activity-specification}),
    under Assumptions \ref{assum:actionExecTimeBounds}-\ref{assum:T_e},
    let $\Psi$ be the start time of the execution. If $S^P, C^P$ are specified start and completion times,
    and $S'^P, C'^P$ are the execution start and completion times
    of the nodes in a behavior $P$ (Definition \ref{def:spec-behavior}) of the specification, then $S^P+\Psi, C^P+\Psi \sim S'^P, C'^P$.
\end{proposition}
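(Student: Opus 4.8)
The plan is to reduce Proposition~\ref{proposition:timing-rel} directly to Lemma~\ref{lemma:induction}, which has already done essentially all the timing work. The key observation is that the specified behavior $P$ is, by Definition~\ref{def:spec-behavior}, an accepted word in $\mathcal{L}(\ioa)$, and any such word corresponds to a unique path in the automaton (noted explicitly after Definition~\ref{def:behavior-preservation}). The execution engine, as described in Section~\ref{section:executionArch}, decomposes that path into its constituent decision paths ${\rho}_1,{\rho}_2,\dots,{\rho}_i$ (Definition~\ref{def:decisionPath}), so that $P$ corresponds exactly to the concatenated path ${\rho}={\rho}_1,{\rho}_2,\dots,{\rho}_i$. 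Thus the behavior $P$ is nothing more than a path in the I/O automaton made of a sequence of decision paths, which is precisely the object to which Lemma~\ref{lemma:induction} applies.

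First I would state that $P$ gives a path ${\rho}={\rho}_1,{\rho}_2,\dots,{\rho}_i$ decomposed into decision paths, and that the nodes of $\mathit{Act}^P = \behfunc(P)$ are exactly the union of the nodes of the ${\rho}_k$. Here I need to check that the behavior activity function $\behfunc$ (Definition~\ref{def:behavior-func}) sequences the activities in the same order, and with the same dependency structure and node timings, as the per-decision-path sequencing the AC performs, so that $S^P, C^P$ restricted to the nodes of ${\rho}_k$ coincide with $S^{{\rho}_k}, C^{{\rho}_k}$ (and likewise for the execution timings $S'^P, C'^P$ and $S'^{{\rho}_k}, C'^{{\rho}_k}$). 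This consistency follows because sequencing is associative in its effect on the retained action and event nodes, and because the start time of each decision path is pinned to the resource availability function $\bar{x}$ produced by the preceding paths, matching the recursive construction of $\behfunc$.

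Then I would simply apply Lemma~\ref{lemma:induction}: for every $k$ with $1 \le k \le i$ we have $S^{{\rho}_k}+\Psi, C^{{\rho}_k}+\Psi \sim S'^{{\rho}_k}, C'^{{\rho}_k}$. Unfolding the timing relation (Definition~\ref{def:tim-preservation}), this means that for each $k$ and each $n \in \mathit{Nod}_{{\rho}_k}$ we have $S^{{\rho}_k}(n)+\Psi \le S'^{{\rho}_k}(n) \le S^{{\rho}_k}(n)+\Psi+\DA$ and $C'^{{\rho}_k}(n) \le C^{{\rho}_k}(n)+\Psi$. Since every node $n \in \mathit{Nod}_P$ belongs to exactly one decision path ${\rho}_k$, and the timing functions agree on that node, these inequalities hold for every $n \in \mathit{Nod}_P$. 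Collecting them over all nodes yields exactly $S^P+\Psi, C^P+\Psi \sim S'^P, C'^P$, which is the claim.

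The main obstacle I expect is the bookkeeping in the second paragraph, namely verifying that the node timings computed globally by $\behfunc(P)$ agree with the node timings of the individually sequenced decision paths ${\rho}_k$ when each is started from the resource-availability state left by its predecessors. This is where the recursive definitions of $\behfunc$, the sequencing operator $;_{(e,k)}$, and the start/completion functions $S,C$ (Definition~\ref{def:system-timing}) must be shown to be mutually consistent, in particular the handling of the causality dependencies $\rightarrow^P_{1.2}$ that tie event-emitting nodes to the nodes of the activity executed in response. Everything genuinely quantitative about the error bound $\DA$ and the conservative-duration and conservative-delay assumptions has already been absorbed into Lemma~\ref{lemma:induction}, so the remaining proof is a short structural argument once this consistency is established.
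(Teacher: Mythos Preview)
Your proposal is correct and takes essentially the same approach as the paper: observe that a specified behavior $P$ corresponds to an accepted word, hence to a path in the deterministic I/O automaton that decomposes into a sequence of decision paths, and then apply Lemma~\ref{lemma:induction} to each decision path to obtain the timing relation for all nodes of $P$. If anything, you are more careful than the paper about the bookkeeping step---verifying that the global timing from $\behfunc(P)$ agrees with the per-decision-path timing---which the paper's proof leaves implicit.
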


We prove Proposition \ref{proposition:timing-rel} using Lemma \ref{lemma:induction}.

\begin{proof}
    To prove Proposition \ref{proposition:timing-rel}, we observe that any behavior $P$ of an activity specification with event feedback
    is derived from an accepted sequence of decision paths in the automaton.
    Since Lemma \ref{lemma:induction} shows that any sequence of decision paths is executed on time,
    then $P$ is executed on time. Therefore, $S^P+\Psi, C^P+\Psi \sim S'^P, C'^P$.
\end{proof}

Note that preserving dependencies between
nodes follows as a corollary of Proposition \ref{proposition:timing-rel}.

\begin{corollary}\label{corollary:behavior-preservation}
    For any behavior $P$ of a given activity specification with event feedback,
    if Proposition \ref{proposition:timing-rel} holds for the start and completion times of action and event nodes in $P$,
    then for all action and event nodes $n_1, n_2 \in \mathit{Nod}_{P}$ such that $n_1 \rightarrow n_2$, we have $C'(n_1) \leq S'(n_2)$, meaning that
    all node dependencies in $P$ are preserved.  \QEDB
\end{corollary}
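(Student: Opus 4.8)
The plan is to unfold the timing relation supplied by Proposition~\ref{proposition:timing-rel} into its constituent inequalities and combine two of them with the model-side definition of start time. First I would fix action or event nodes $n_1, n_2 \in \mathit{Nod}_P$ with $n_1 \rightarrow_P n_2$. Applying the relation $S^P+\Psi, C^P+\Psi \sim S'^P, C'^P$ at $n_1$ and $n_2$ (Definition~\ref{def:tim-preservation}) yields, among others, the completion bound $C'(n_1) \leq C(n_1) + \Psi$ and the start lower bound $S(n_2) + \Psi \leq S'(n_2)$; the $\DA$-part of the relation plays no role here.

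Next I would invoke the specified timing of the activity $\mathit{Act}^P = \behfunc(P)$. Because $n_2$ is an action or event node it is not a claim node, so by Definition~\ref{def:system-timing} its specified start time is $S(n_2) = \max_{n' \in \mathit{Pred}(n_2)} C(n')$. Since $n_1 \rightarrow_P n_2$ means $n_1 \in \mathit{Pred}(n_2)$, the term $C(n_1)$ appears in this maximum, giving the model-side inequality $C(n_1) \leq S(n_2)$. Chaining the three inequalities produces
\[
    C'(n_1) \leq C(n_1) + \Psi \leq S(n_2) + \Psi \leq S'(n_2),
\]
which is exactly the claimed dependency preservation $C'(n_1) \leq S'(n_2)$.

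The only point requiring care — and hence the main (if modest) obstacle — is justifying the model-side step $C(n_1) \leq S(n_2)$. This holds precisely because the corollary restricts attention to action and event nodes: for a claim node the specified start time is governed by resource availability $\bar{x}(r)$ rather than by predecessor completions, so the max-over-predecessors identity would not apply and the chain could break. Since dependency preservation is only required between observable (action and event) nodes of the executed behavior (Definitions~\ref{def:exec-behavior} and~\ref{def:behavior-preservation}), this restriction is exactly what the statement needs, and no separate argument for claim or release nodes is necessary. Everything else is an immediate substitution, so I expect the write-up to be short.
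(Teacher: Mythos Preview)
Your argument is correct and is exactly the unfolding the paper has in mind: the corollary is stated with a \QEDB{} and no explicit proof, the surrounding text merely noting that dependency preservation ``follows as a corollary'' of Proposition~\ref{proposition:timing-rel}. Your chain $C'(n_1) \leq C(n_1)+\Psi \leq S(n_2)+\Psi \leq S'(n_2)$, together with the observation that the max-over-predecessors clause of Definition~\ref{def:system-timing} applies because $n_2$ is not a claim node, makes that implicit step fully explicit and matches the paper's intent.
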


By design, the execution engine
respects the values of the event outcomes that correspond to the behavior that it executes
and processes them in the order specified by the behavior, and
it executes all the action and event nodes of that behavior.
Therefore, the execution engine preserves the behavior that it executes
(Definition \ref{def:behavior-preservation}).
We then formulate the main contribution of this article in a theorem that covers time and behavior preservation
by our execution architecture as follows.

\begin{theorem}\label{theorem:final}
    For any activity specification with event feedback (Definition \ref{def:activity-specification}),
    under Assumptions \ref{assum:actionExecTimeBounds}-\ref{assum:T_e}, %let $\outcomeset$ be the set of event outcomes, and event outcome sequence $U' =u_1,u_2,\dots ,u_i \in \outcomeset^*$ be a valid sequence of event outcomes corresponding to a specified behavior. If $U'$ is received during execution,
    any behavior that is executed, %determined by event outcomes received from the plant at runtime,
    is executed in a time- (Definition \ref{def:tim-preservation}) and
    behavior-preserving (Definition \ref{def:behavior-preservation}) manner.   \QEDB
\end{theorem}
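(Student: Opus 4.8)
The plan is to establish the theorem as the conjunction of its two named components---time-preservation in the sense of Definition~\ref{def:tim-preservation} and behavior-preservation in the sense of Definition~\ref{def:behavior-preservation}---and to dispatch each by assembling results already proved. The time-preservation half is essentially immediate: I would fix an arbitrary behavior $P$ that the engine executes and observe that, by Definition~\ref{def:spec-behavior}, $P$ is an accepted word whose activity is $\mathit{Act}^P=\behfunc(P)$. Proposition~\ref{proposition:timing-rel} applied to $P$ states exactly $S^P+\Psi, C^P+\Psi \sim S'^P, C'^P$, which is the required timing relation, so the first conjunct needs no new work beyond invoking that proposition.

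For the behavior-preservation half I would unfold Definition~\ref{def:behavior-preservation} into its four obligations: (i) every action and event node of $\mathit{Nod}_P$ is executed; (ii) no other action or event is executed; (iii) the same sequence $U^P$ of (event, outcome) pairs is received from the plant and processed in that order; and (iv) for every dependency $n_1 \rightarrow_P n_2$ one has $C'(n_1)\leq S'(n_2)$. Obligation~(iv) follows at once from Corollary~\ref{corollary:behavior-preservation}, whose hypotheses are exactly the assumptions in force here; thus the timing half of the theorem and the dependency half of behavior-preservation rest on the same engine and the same proposition.

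Obligations~(i)--(iii) I would discharge by appealing to the faithfulness of the three-layer pipeline to the construction of $P$. The argument traces the execution: the LC traverses exactly the path in $\ioa$ underlying $P$, decomposed by Definition~\ref{def:decisionPath} into decision paths, and emits precisely the activity labels occurring along that path. Because $\ioa$ is deterministic and complete (Definitions~\ref{def:DIOA} and~\ref{def:CompIOA}), each outcome received from the plant selects a unique outgoing transition at a decision state, so the processed sequence of (event, outcome) pairs is exactly $\Vee(P)=U^P$ in order, giving~(iii). The AC then sequences precisely those activities via the operator of Definition~\ref{def:sequence-op}, producing exactly $\behfunc(P)=\mathit{Act}^P$, and forwards to the aC the labels of precisely the action and event nodes of $\mathit{Act}^P$; since resource nodes are unobservable and excluded from executed behavior by Definition~\ref{def:exec-behavior}, the set of executed action and event nodes coincides with $\mathit{Nod}_P$, yielding both~(i) and~(ii).

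The main obstacle is obligations~(i)--(iii) rather than the timing, because these are not inequalities to be chained but a claim that the pipeline realizes $\behfunc$ and $\Vee$ exactly. Making this watertight requires the determinism and completeness assumptions to guarantee that every runtime outcome keeps the execution on an accepted word---so the engine neither stalls nor branches ambiguously---together with the observation that the AC's sequencing reproduces the same node set and dependency structure that $\behfunc$ prescribes, including the causality edges the operator of Definition~\ref{def:sequence-op} adds from an event node to the successors of the next activity's claim nodes. Once this construction-faithfulness is granted, the theorem is simply the conjunction of Proposition~\ref{proposition:timing-rel} and Corollary~\ref{corollary:behavior-preservation} with the design-level facts~(i)--(iii).
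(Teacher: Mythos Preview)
Your proposal is correct and follows essentially the same approach as the paper: time-preservation via Proposition~\ref{proposition:timing-rel}, dependency-preservation via Corollary~\ref{corollary:behavior-preservation}, and the remaining behavior-preservation obligations by the design-level faithfulness of the LC/AC/aC pipeline to $\behfunc$ and $\Vee$. The paper itself dispatches obligations~(i)--(iii) in a single sentence (``by design, the execution engine respects the values of the event outcomes \ldots and it executes all the action and event nodes of that behavior'') and then states the theorem without further proof; your unpacking of that sentence into the determinism/completeness of $\ioa$ and the AC's realization of $\behfunc$ is a faithful elaboration of what the paper leaves implicit.
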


\section{Validation}\label{section:validation}

In this section, we validate our approach on the xCPS platform (Figure \ref{fig:xcps}) by demonstrating
that the execution engine executes the xCPS SC specification in a time- and behavior-preserving manner.
The SC specification is an activity model (Definition \nolinebreak\ref{def:activity-specification}) of the xCPS.

Figure \ref{fig:xcps-gantt} shows a Gantt chart of a behavior of the xCPS
producing two products (for brevity), where horizontal bars
correspond to actions on peripherals and the arrows in between them are the dependencies.
The figure shows a non-trivial behavior and the challenges that arise due to
\emph{synchronization},  \emph{concurrency},  \emph{timing} and \emph{pipelining}, which we
discussed in Section \ref{section:motivexample}.
There, we briefly explained the details of how a product is assembled.
Configured with the specification of the xCPS SC, the execution engine becomes the supervisory controller,
performing all necessary actions to produce products, while
preserving time and behavior for all action and event nodes.

\begin{figure}[t!]
    \centering
    \scalebox{1}{
        \includegraphics[clip,width=0.99\columnwidth]{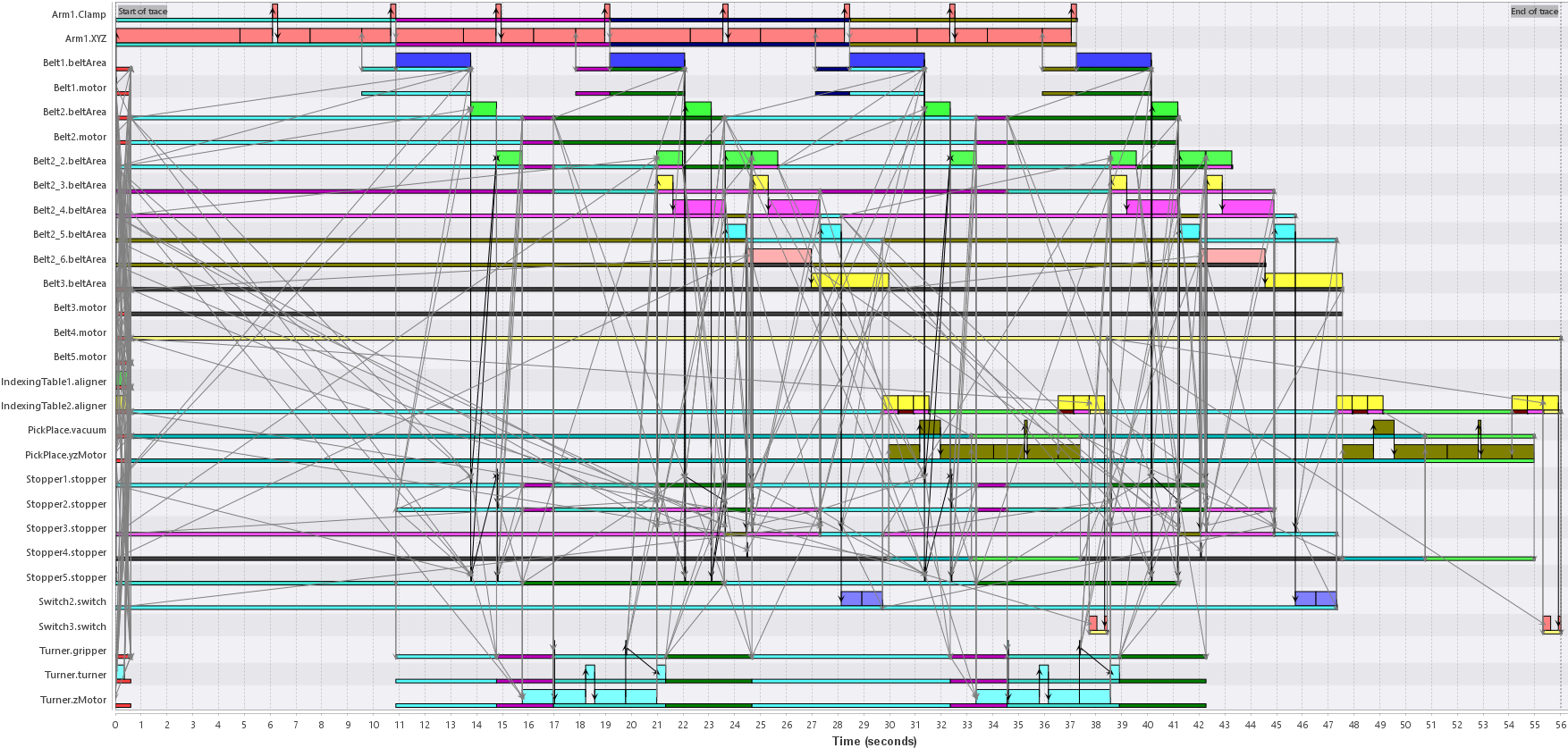}
    }
    \caption{Gantt chart of the xCPS producing two products}
    \label{fig:xcps-gantt}
\end{figure}

To properly control the xCPS we need to preserve the model start and completion times with timing errors
bounded by a time value, denoted by $\mathbb{B}$, determined by time-critical actions of the xCPS.
As explained in Section \ref{section:motivexample}, the act of pushing pieces to the indexing table using switch 2
is the most time-sensitive action of the xCPS. This action has a time window of 20 milliseconds to be correctly
executed, computed based on belt speed and distance. Missing this time window means
the piece can no longer be pushed to the table (Figure \ref{fig:xcps-schematic}).
Therefore, we take $\mathbb{B}=20$ milliseconds. The execution engine must satisfy this property to ensure
proper control of the machine.

Recall that behavior is preserved since, through timing preservation,
a node never starts earlier than the
completion time of a node it depends on, all nodes of a behavior are executed, and all event outcomes of a
behavior are respected (conform Theorem \ref{theorem:final}).

Given the specification of the xCPS, we must check whether it satisfies Assumptions \ref{assum:T-prime-n} and \ref{assum:T_e}.
Therefore, we first establish the execution delay bound ($\DA$)
of the nodes, as well as the event processing delay bound ($\DE$).
Then, we determine the worst-case execution duration ($T'(n)$) of actions and the delay of events in the system
(Definition \ref{def:T-prime}).
With these values, we can determine if the specified duration of action nodes and the specified delay of event nodes,
satisfy Assumptions \ref{assum:T-prime-n} and \ref{assum:T_e}, respectively.

To establish the execution delay bound ($\DA$), we measure the
response time of the aC for every node and check how long the aC takes to process
each action or event node.
Establishing $\DA$ is also necessary to check if we can meet the time-criticality of
nodes in the xCPS such that $\DA \leq \mathbb{B}$.
Then, from these measurements,
we determine $\DA$. There are variations in these measurements because we run
our engine on a Linux machine that has tasks to perform beside the
execution engine, and they take time. 
We observe that the aC takes consistently less than $1.4$ milliseconds to process nodes.
To allow some margin, we determine $\DA = 1.6$ milliseconds as the bound that
the execution engine guarantees on the execution delay of nodes.
Note that establishing $\DA$ also satisfies Assumption \nolinebreak\ref{assum:actionDelayBounds}.

Next, we establish $\DE$ based on measurements on the execution engine.
Since we run the engine on a Linux machine that cannot guarantee timing, we cannot compute $\DE$
based on the size of the specification. We use the largest
decision path in the specification based on the total number of nodes in
its activities to establish the bound $\DE$.
We discussed in Section \ref{subsection:delay-bounds}
why that is the largest contributor to the time that it takes to process an event.
By giving the largest decision path to the engine in a test scenario (not connected to the physical
machine), and measuring the time that it takes to sequence all the activities in that decision path,
we can determine individual components that make up
$\DE$. Note that the values of $T(n)$ have no effect on the results of this test.
The largest decision path of the xCPS specification has 75 nodes.
We observe $\Dee_\mathit{Event} < 0.5$, $\DLC < 1.8$, $\DAC < 2.8$, and $\DaC < 0.8$ milliseconds
for the processing time of this decision path. Therefore, we determine $\DE = 6$ milliseconds (rounded upwards).
Further, we observe the time needed to resolve the assembler event is $T'(e) <1.9$ milliseconds, which we then use to check the
event delays of the model to satisfy Assumption \ref{assum:T_e}.
The specified delay of the event node of the assembler
is $10$ milliseconds which satisfies Assumption \ref{assum:T_e} since $10>1.9+6+1.6 $.

To determine $T'(n)$ for action nodes,
we examined the timings of the actions that the
xCPS performs. Some actions, such as movements of the gantry arms, are controlled by a dedicated precision motion controller
that moves the arms very accurately. These actions do not need timing measurements
since we can determine the bound on their timing based on the configuration of their controller.
For actions that have very little variation in their timing, we performed 20 measurements
and  for the ones that have large variation in their timing, we performed 100 measurements.
In both cases, for every action, the bound on their timing was determined to be the maximum over all measurements
of that action (Assumption \ref{assum:actionExecTimeBounds}).

Having the $T'(n)$ of all action nodes, $\DA$, and $\DE$, we then check
their timing in the model.
The xCPS specification satisfies Assumption \ref{assum:T-prime-n} for all nodes.

Our execution engine guarantees time and behavior preservation under Assumptions \ref{assum:actionExecTimeBounds}-\ref{assum:T_e}.
So far, we have determined $\DA$, $\DE$, and $T'(n)$ of all action and event nodes for the xCPS,
satisfying Assumptions \ref{assum:actionExecTimeBounds}, \ref{assum:actionDelayBounds}, \ref{assum:T-prime-n}, and  \ref{assum:T_e}.
Now we check the rest of our assumptions, and then, we execute the complete specification.

To ensure that the I/O automaton of the xCPS has no blocking states (Assumption \ref{assum:blockingStates}),
it was synthesized using the approach of \cite{sanden2015}
with modular automata that describe the order of activities executed in each station of the xCPS.
For example, an automaton describes the different allowed orders of activities that can be executed
around the indexing table to safely align it with the belts for pushing products on and off of it, as well
as aligning it with the assembler to assemble a product. The approach is based on
SC synthesis theory of \cite{ramadge1987supervisory}, and ensures non-blockingness
because transitions leading to blocking states are pruned during synthesis. We synthesized an I/O automaton
to produce three products that has 1,169 states and 3,221 transitions. A deterministic I/O automaton was then
derived from the result and was given as logistics automaton to the execution engine.
This was done by pruning transitions that violated the restrictions of a deterministic I/O automaton.
For every state, we only allowed one outgoing transition unless they were based on different outcomes of the same event.
Then we removed the states and transitions that became unreachable because of the pruning.
As a result of the synthesis, the I/O automaton had only one initial and one final state and no outgoing transition
from the final state. The deterministic I/O automaton has 59 states and 61 transitions.

To ensure that the engine has sufficient time to initialize (Assumption \ref{assum:Psi}),
we measure the initialization time of the execution engine by initializing the complete specification of the xCPS.
Since our implementation is based on a Linux machine, we perform this test 20 times.
It consistently takes less than 800 milliseconds to prepare every layer
and start executing the nodes. Therefore, we choose $\Psi$ to be the wall clock time one second
in the future from when we start the execution.

\begin{figure}[t!]
    \centering
    \scalebox{0.9}{
        \includegraphics[clip,width=0.6\columnwidth]{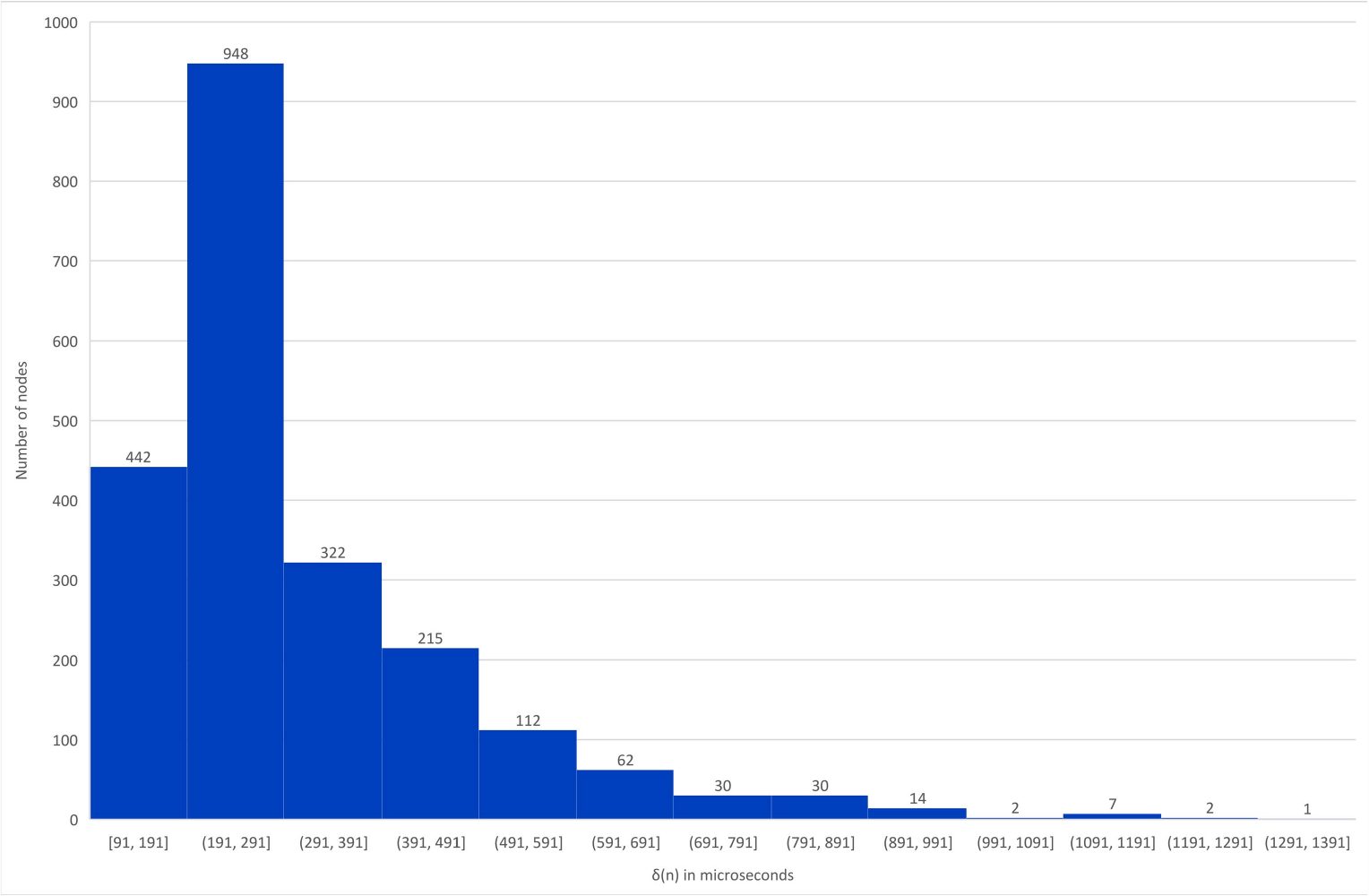}
    }
    \caption{$\delta(n)$ for all nodes over 10 runs of the xCPS}
    \label{fig:xcps-deltas}
\end{figure}

Now that all the specification and the plant assumptions are satisfied, we run the complete specification 10 times, each time producing three products.
Figure \ref{fig:xcps-deltas} shows the distribution of $\delta(n)$ for all action and event nodes over 10 runs of the system.
Every run executes 241 action and event nodes, 2410 in total of 10 runs.
We observe that $\DA \leq \mathbb{B}$. Meaning that the execution engine guarantees an
execution delay bound for all nodes that satisfies the time-criticality of actions in the xCPS.
We further observe that a great majority (71\%) of action and event nodes are executed less than 391 microseconds after their model start time.
We notice that $\delta(n)$ increases when there are many nodes starting at the same time and the hardware concurrency
of our controller does not provide enough resources to handle all of them concurrently.

In conclusion, this shows that we have delivered the promise of Theorem \ref{theorem:final} on a physical FMS.
All action and event nodes, including the most time-sensitive ones, are executed within their specified start and completion times,
meaning that Proposition~\ref{proposition:timing-rel} is confirmed in a physical setting.
The execution engine executes the complete specification of the xCPS. It executes all nodes of a behavior, respects the
outcomes that correspond to the behavior, and preserves node timings and dependencies.

\section{Related Work}\label{section:relatedWrok}
In this section, we discuss the literature relevant to our work. %We describe how the I/O automaton could be synthesized while preserving properties.
We view our work as an automated way to synthesize a model-driven implementation given a set of assumptions on the
plant and the model. There exist two main types of approaches to implementation synthesis.
One set of approaches aim to
configure an engine that is designed specifically for a given model
of computation (\cite{ptides, van2015scenario, zhou1992design}),
and other approaches rely on code generation (\cite{lohstroh2021toward, amnell2004times}).
Our work fits within the former category.
Implementation synthesis approaches may preserve a variety of properties, such as behavior and timing.
Our work preserves timing as well as behavior, where behavior is preserved through timing guarantees and respecting event outcomes.
In the end, we compare the functionality of the aC with the time-triggered architecture \nolinebreak\cite{TTA}.

The logistics controller of our execution engine is configured with a deterministic I/O automaton.
We do not make any assumptions on how the logistics automaton is obtained.
The I/O automaton could be synthesized using the SC theory of \nolinebreak\cite{ramadge1987supervisory}.
In such an approach, the activities would be considered \emph{controllable}
actions that may emit events, and the possible outcomes of those events would be considered \emph{uncontrollable} actions
which we respond to by a sequence of activities.
Such an approach would guarantee desirable properties such as non-blockingness.

Similar to our approach where the specification is used to configure an execution engine,
an actor-based model is used to configure an execution engine for distributed real-time embedded systems in \cite{ptides}.
Every actor has a set of input ports and output ports and a function that captures the timing relation between them. %An actor processes events at its input port and emits new ones at its output ports corresponding to that input port.
The framework provides a mechanism to decide whether it is safe to process an input event at any input port or not, in
a distributed execution setting. %An event is considered safe to process at an input port if there is no other event with smaller timestamp that can affect the output ports that the input port is mapped to.
Every event has a deterministic logical timestamp, decoupled from physical timing variations. \emph{Real-time ports} provide a guaranteed relation between
the real time at which they process an event and the logical timestamp of that event. This method ensures a deterministic execution, independent from non-deterministic physical (distributed) timing variations.
Activities can be mapped to the actor-based model of \nolinebreak\cite{ptides}.
However, such a mapping would lose the scalability and analysis advantages of our model that
come from the multiple levels of abstraction for specification (actions, activities, automata).
%The framework does not allow dynamic change in the ports and their mapping. The core of the introduced execution strategy is a mechanism to decide whether it is safe to process an input event. While the emphasis of \cite{ptides} is on ensuring the order in which events are processed, in our work, we also take timing of actions into account, and we maintain a strong relation between model time and real time.

Another example of configuring an execution engine with a specification can be found in \cite{van2015scenario},
where Finite State Machine-based Scenario-Aware Data-Flow (FSM-SADF) \cite{fsmsadf}
is used to specify a multi-processor system.
The execution architecture is configured by allocating resources to actors.
Since the approach uses time-division multiplexing (TDM), these resources include
processor cores and memory, as well as time slots of those cores to guarantee timing.  
The number of scenarios to be executed affects resource budgets and switching time.
Pipelining of multiple scenarios is possible and scenarios do not overtake each other.
The execution of scenarios is deterministic and data-driven which allows a predictable timing
and worst-case guarantees for them. Our approach also provides worst-case behavior
timing guarantees assuming worst-case timing for actions.
Our execution engine follows a time-driven execution where model-prescribed system timing
of nodes and a given $\Psi$ determine execution timings. Similar to \cite{van2015scenario},
timing overhead of our engine also increases
with the size of the model. We hide this overhead in $\DE$.

Petri nets \cite{peterson1981petri} are a well-known formalism with potential
as a means to configure an execution engine.
An example can be found in \cite{zhou1992design} where they are directly executed on an FMS.
A Petri net is systematically specified such that properties such as liveliness and
boundedness are guaranteed. A supervisory controller is then
given from this Petri net that preserves these properties.
Similar to our architecture, this is executed on the controller
of an FMS that controls the individual actions of the system.
The controller runs a Petri net execution engine that is configured with the specification of the
supervisory controller. The order of actions is preserved by the execution engine while timing of actions is not considered.
We assume timed actions and guarantee timing and behavior preservation between the model and the
execution. Furthermore, we enable the supervisory controller to make decisions at runtime based on different outcomes of events,
which is not possible in \cite{zhou1992design}.

Now we move on to code generation methods such as \cite{amnell2004times}, where
timed automata \cite{alur1994theory} are used for code generation for real-time systems.
The model is extended with task graphs to describe dependencies between tasks. Resources of a task are mapped to semaphores
that are used to ensure proper resource use. The automaton is extended with guards that have clock (i.e., time) and variable constraints.
Similar to our work, a deterministic automaton is used. Determinism is ensured by enforcing priorities to action transitions
that are simultaneously enabled and could cause non-deterministic behavior in the automaton.
Similar timing assumptions to ours, namely, Assumptions \ref{assum:actionDelayBounds} and \ref{assum:actionExecTimeBounds}, are made.
The three levels of abstraction in our framework for specification (actions, activities, automata)
provide more scalability in specification and analysis (using max-plus linear matrix-vector multiplications)~\cite{sanden2016}.

Another code-generation method, which is considered a follow-up of \cite{ptides}, can be found in \cite{lohstroh2021toward},
which focuses on providing a deterministic execution for concurrent actors (called \emph{reactors} in the framework) that react to incoming events carrying logical time stamps.
Similar to our work, a DAG is used to specify dependencies between reactors.
The framework provides code generation for a number of different programming languages.
Every reactor is then mapped to a block of generated code with guarantees that
concurrent reactors are executed in a deterministic manner, meaning that a set of inputs
always gives a predictable set of outputs. Similar to behavior preservation in our approach,
perturbations in the timing of reactors do not change the behavior of the system.  
Similar to \cite{ptides}, activities can be mapped to the model described in \cite{lohstroh2021toward},
with the advantage of giving a deterministic execution as opposed to \nolinebreak\cite{ptides}.
However, the same discussion on scalability advantages (concerning specification and analysis)
of our framework over \nolinebreak\cite{ptides}, can be made for \cite{lohstroh2021toward}.

The aC uses time-triggered execution comparable to the
time-triggered architecture introduced in \cite{TTA}, where real time is modeled
with instances of time with fixed intervals in between them, called \emph{durations}, from the past to the future.
A duration can be of activity (when the modeled system is active)
or of silence (when it is not). %Events are observations of the state of the world with timestamp
In our approach, time is not divided into fixed intervals.
The time instances of our model are derived from the specification timings.
Our execution engine follows these timings and does not incorporate a predefined
set of fixed durations where it is active or silent.
The aC continuously observes its clock time and at the model-prescribed time
of a node plus $\Psi$, sends signals to
actuators that perform actions (for action nodes), and samples data from sensors (for event nodes).
In other words, instead of being triggered by the arrival of the next fixed instance of time,
as is the case in \cite{TTA}, the aC is triggered by the start and completion times of the nodes that it executes.

\section{Conclusion}\label{section:conclusion}

In this article, we turned the activity framework into a model-driven approach by introducing an execution architecture and 
execution engine that allow a specification to be executed in a time- and behavior-preserving fashion.
We proved that the architecture and engine preserve the specified ordering of action and event nodes in 
the specification as well as the timing of these nodes up to a specified bound, while respecting the sequence of 
event outcomes that are received from the plant. 
We demonstrated our approach on the xCPS, a prototype production system.

\section{Acknowledgement}\label{section:ack}
 
This research was funded by EU ECSEL Joint Undertaking under grant agreement no 826452 (project Arrowhead Tools).

\section{Conflicts of Interest} 

The authors have no conflict of interest to declare that are relevant to this article.

\bibliography{sn-bibliography}% common bib file

%% BioMed_Central_Bib_Style_v1.01

\begin{thebibliography}{20}
% BibTex style file: bmc-mathphys.bst (version 2.1), 2014-07-24
\ifx \bisbn   \undefined \def \bisbn  #1{ISBN #1}\fi
\ifx \binits  \undefined \def \binits#1{#1}\fi
\ifx \bauthor  \undefined \def \bauthor#1{#1}\fi
\ifx \batitle  \undefined \def \batitle#1{#1}\fi
\ifx \bjtitle  \undefined \def \bjtitle#1{#1}\fi
\ifx \bvolume  \undefined \def \bvolume#1{\textbf{#1}}\fi
\ifx \byear  \undefined \def \byear#1{#1}\fi
\ifx \bissue  \undefined \def \bissue#1{#1}\fi
\ifx \bfpage  \undefined \def \bfpage#1{#1}\fi
\ifx \blpage  \undefined \def \blpage #1{#1}\fi
\ifx \burl  \undefined \def \burl#1{\textsf{#1}}\fi
\ifx \doiurl  \undefined \def \doiurl#1{\url{https://doi.org/#1}}\fi
\ifx \betal  \undefined \def \betal{\textit{et al.}}\fi
\ifx \binstitute  \undefined \def \binstitute#1{#1}\fi
\ifx \binstitutionaled  \undefined \def \binstitutionaled#1{#1}\fi
\ifx \bctitle  \undefined \def \bctitle#1{#1}\fi
\ifx \beditor  \undefined \def \beditor#1{#1}\fi
\ifx \bpublisher  \undefined \def \bpublisher#1{#1}\fi
\ifx \bbtitle  \undefined \def \bbtitle#1{#1}\fi
\ifx \bedition  \undefined \def \bedition#1{#1}\fi
\ifx \bseriesno  \undefined \def \bseriesno#1{#1}\fi
\ifx \blocation  \undefined \def \blocation#1{#1}\fi
\ifx \bsertitle  \undefined \def \bsertitle#1{#1}\fi
\ifx \bsnm \undefined \def \bsnm#1{#1}\fi
\ifx \bsuffix \undefined \def \bsuffix#1{#1}\fi
\ifx \bparticle \undefined \def \bparticle#1{#1}\fi
\ifx \barticle \undefined \def \barticle#1{#1}\fi
\bibcommenthead
\ifx \bconfdate \undefined \def \bconfdate #1{#1}\fi
\ifx \botherref \undefined \def \botherref #1{#1}\fi
\ifx \url \undefined \def \url#1{\textsf{#1}}\fi
\ifx \bchapter \undefined \def \bchapter#1{#1}\fi
\ifx \bbook \undefined \def \bbook#1{#1}\fi
\ifx \bcomment \undefined \def \bcomment#1{#1}\fi
\ifx \oauthor \undefined \def \oauthor#1{#1}\fi
\ifx \citeauthoryear \undefined \def \citeauthoryear#1{#1}\fi
\ifx \endbibitem  \undefined \def \endbibitem {}\fi
\ifx \bconflocation  \undefined \def \bconflocation#1{#1}\fi
\ifx \arxivurl  \undefined \def \arxivurl#1{\textsf{#1}}\fi
\csname PreBibitemsHook\endcsname

%%% 1
\bibitem[\protect\citeauthoryear{Stecke}{1983}]{stecke1983formulation}
\begin{barticle}
\bauthor{\bsnm{Stecke}, \binits{K.E.}}:
\batitle{Formulation and solution of nonlinear integer production planning
  problems for flexible manufacturing systems}.
\bjtitle{Management science}
\bvolume{29}(\bissue{3}),
\bfpage{273}--\blpage{288}
(\byear{1983})
\end{barticle}
\endbibitem

%%% 2
\bibitem[\protect\citeauthoryear{ElMaraghy}{2005}]{elmaraghy2005flexible}
\begin{barticle}
\bauthor{\bsnm{ElMaraghy}, \binits{H.A.}}:
\batitle{Flexible and reconfigurable manufacturing systems paradigms}.
\bjtitle{International journal of flexible manufacturing systems}
\bvolume{17}(\bissue{4}),
\bfpage{261}--\blpage{276}
(\byear{2005})
\end{barticle}
\endbibitem

%%% 3
\bibitem[\protect\citeauthoryear{Nicolescu and
  Mosterman}{2018}]{nicolescu2018model}
\begin{bbook}
\bauthor{\bsnm{Nicolescu}, \binits{G.}},
\bauthor{\bsnm{Mosterman}, \binits{P.}}:
\bbtitle{Model-Based Design for Embedded Systems},
(\byear{2018}).
\doiurl{10.1201/9781315218823}
\end{bbook}
\endbibitem

%%% 4
\bibitem[\protect\citeauthoryear{Schmidt}{2006}]{schmidt2006model}
\begin{barticle}
\bauthor{\bsnm{Schmidt}, \binits{D.C.}}:
\batitle{Model-driven engineering}.
\bjtitle{Computer-IEEE Computer Society-}
\bvolume{39}(\bissue{2}),
\bfpage{25}
(\byear{2006})
\end{barticle}
\endbibitem

%%% 5
\bibitem[\protect\citeauthoryear{van~der Sanden et~al.}{2016}]{sanden2016}
\begin{bchapter}
\bauthor{\bsnm{Sanden}, \binits{B.}},
\bauthor{\bsnm{Bastos}, \binits{J.}},
\bauthor{\bsnm{Voeten}, \binits{J.}},
\bauthor{\bsnm{Geilen}, \binits{M.}},
\bauthor{\bsnm{Reniers}, \binits{M.}},
\bauthor{\bsnm{Basten}, \binits{T.}},
\bauthor{\bsnm{Jacobs}, \binits{J.}},
\bauthor{\bsnm{Schiffelers}, \binits{R.}}:
\bctitle{Compositional specification of functionality and timing of
  manufacturing systems}.
In: \bbtitle{2016 Forum on Specification and Design Languages (FDL)},
pp. \bfpage{1}--\blpage{8}
(\byear{2016}).
\bcomment{IEEE}
\end{bchapter}
\endbibitem

%%% 6
\bibitem[\protect\citeauthoryear{van~der Sanden et~al.}{2015}]{sanden2015}
\begin{bchapter}
\bauthor{\bsnm{Sanden}, \binits{B.}},
\bauthor{\bsnm{Reniers}, \binits{M.}},
\bauthor{\bsnm{Geilen}, \binits{M.}},
\bauthor{\bsnm{Basten}, \binits{T.}},
\bauthor{\bsnm{Jacobs}, \binits{J.}},
\bauthor{\bsnm{Voeten}, \binits{J.}},
\bauthor{\bsnm{Schiffelers}, \binits{R.}}:
\bctitle{Modular model-based supervisory controller design for wafer logistics
  in lithography machines}.
In: \bbtitle{2015 ACM/IEEE 18th International Conference on Model Driven
  Engineering Languages and Systems (MODELS)},
pp. \bfpage{416}--\blpage{425}
(\byear{2015}).
\doiurl{10.1109/MODELS.2015.7338273}
\end{bchapter}
\endbibitem

%%% 7
\bibitem[\protect\citeauthoryear{van~der Sanden et~al.}{2021}]{lsat}
\begin{bchapter}
\bauthor{\bsnm{Sanden}, \binits{B.}},
\bauthor{\bsnm{Blankenstein}, \binits{Y.}},
\bauthor{\bsnm{Schiffelers}, \binits{R.}},
\bauthor{\bsnm{Voeten}, \binits{J.}}:
\bctitle{{LSAT}: Specification and analysis of product logistics in flexible
  manufacturing systems}.
In: \bbtitle{2021 IEEE 17th International Conference on Automation Science and
  Engineering (CASE)},
pp. \bfpage{1}--\blpage{8}
(\byear{2021}).
\bcomment{IEEE}
\end{bchapter}
\endbibitem

%%% 8
\bibitem[\protect\citeauthoryear{Mohamadkhani et~al.}{2023}]{mohamadkhani}
\begin{barticle}
\bauthor{\bsnm{Mohamadkhani}, \binits{A.}},
\bauthor{\bsnm{Geilen}, \binits{M.}},
\bauthor{\bsnm{Voeten}, \binits{J.}},
\bauthor{\bsnm{Basten}, \binits{T.}}:
\batitle{Modeling and analysis of switching max-plus linear systems with
  discrete-event feedback}.
\bjtitle{Discrete Event Dynamic Systems}
\bvolume{33}(\bissue{3}),
\bfpage{341}--\blpage{372}
(\byear{2023})
\end{barticle}
\endbibitem

%%% 9
\bibitem[\protect\citeauthoryear{Adyanthaya et~al.}{2017}]{xcps}
\begin{barticle}
\bauthor{\bsnm{Adyanthaya}, \binits{S.}},
\bauthor{\bsnm{Ara}, \binits{H.A.}},
\bauthor{\bsnm{Bastos}, \binits{J.}},
\bauthor{\bsnm{Behrouzian}, \binits{A.}},
\bauthor{\bsnm{S{\'a}nchez}, \binits{R.M.}},
\bauthor{\bsnm{Pinxten}, \binits{J.}},
\bauthor{\bsnm{Sanden}, \binits{B.}},
\bauthor{\bsnm{Waqas}, \binits{U.}},
\bauthor{\bsnm{Basten}, \binits{T.}},
\bauthor{\bsnm{Corporaal}, \binits{H.}},
\bauthor{\bsnm{Frijns}, \binits{R.}},
\bauthor{\bsnm{Geilen}, \binits{M.}},
\bauthor{\bsnm{Goswami}, \binits{D.}},
\bauthor{\bsnm{Hendriks}, \binits{M.}},
\bauthor{\bsnm{Stuijk}, \binits{S.}},
\bauthor{\bsnm{Reniers}, \binits{M.}},
\bauthor{\bsnm{Voeten}, \binits{J.}}:
\batitle{{xCPS}: a tool to explore cyber physical systems}.
\bjtitle{ACM SIGBED Review}
\bvolume{14}(\bissue{1}),
\bfpage{81}--\blpage{95}
(\byear{2017})
\end{barticle}
\endbibitem

%%% 10
\bibitem[\protect\citeauthoryear{Lynch and
  Tuttle}{1988}]{lynch1988introduction}
\begin{bbook}
\bauthor{\bsnm{Lynch}, \binits{N.A.}},
\bauthor{\bsnm{Tuttle}, \binits{M.R.}}:
\bbtitle{An Introduction to Input/output Automata}.
\bpublisher{Laboratory for Computer Science, Massachusetts Institute of
  Technology},
\blocation{Massachusetts, United States}
(\byear{1988})
\end{bbook}
\endbibitem

%%% 11
\bibitem[\protect\citeauthoryear{Ramadge and
  Wonham}{1987}]{ramadge1987supervisory}
\begin{barticle}
\bauthor{\bsnm{Ramadge}, \binits{P.J.}},
\bauthor{\bsnm{Wonham}, \binits{W.M.}}:
\batitle{Supervisory control of a class of discrete event processes}.
\bjtitle{SIAM journal on control and optimization}
\bvolume{25}(\bissue{1}),
\bfpage{206}--\blpage{230}
(\byear{1987})
\end{barticle}
\endbibitem

%%% 12
\bibitem[\protect\citeauthoryear{Derler et~al.}{2008}]{ptides}
\begin{botherref}
\oauthor{\bsnm{Derler}, \binits{P.}},
\oauthor{\bsnm{Feng}, \binits{T.H.}},
\oauthor{\bsnm{Lee}, \binits{E.A.}},
\oauthor{\bsnm{Matic}, \binits{S.}},
\oauthor{\bsnm{Patel}, \binits{H.D.}},
\oauthor{\bsnm{Zhao}, \binits{Y.}},
\oauthor{\bsnm{Zou}, \binits{J.}}:
Ptides: A programming model for distributed real-time embedded systems.
University of California, Berkeley, EECS Technical Report. EECS-2008-72
(2008)
\end{botherref}
\endbibitem

%%% 13
\bibitem[\protect\citeauthoryear{{van}~Kampenhout
  et~al.}{2015}]{van2015scenario}
\begin{bchapter}
\bauthor{\bsnm{{van}~Kampenhout}, \binits{R.}},
\bauthor{\bsnm{Stuijk}, \binits{S.}},
\bauthor{\bsnm{Goossens}, \binits{K.}}:
\bctitle{A scenario-aware dataflow programming model}.
In: \bbtitle{2015 Euromicro Conference on Digital System Design},
pp. \bfpage{25}--\blpage{32}
(\byear{2015}).
\bcomment{IEEE}
\end{bchapter}
\endbibitem

%%% 14
\bibitem[\protect\citeauthoryear{Zhou et~al.}{1992}]{zhou1992design}
\begin{barticle}
\bauthor{\bsnm{Zhou}, \binits{M.C.}},
\bauthor{\bsnm{Dicesare}, \binits{F.}},
\bauthor{\bsnm{Rudolph}, \binits{D.L.}}:
\batitle{Design and implementation of a {Petri} net based supervisor for a
  flexible manufacturing system}.
\bjtitle{Automatica}
\bvolume{28}(\bissue{6}),
\bfpage{1199}--\blpage{1208}
(\byear{1992})
\end{barticle}
\endbibitem

%%% 15
\bibitem[\protect\citeauthoryear{Lohstroh et~al.}{2021}]{lohstroh2021toward}
\begin{barticle}
\bauthor{\bsnm{Lohstroh}, \binits{M.}},
\bauthor{\bsnm{Menard}, \binits{C.}},
\bauthor{\bsnm{Bateni}, \binits{S.}},
\bauthor{\bsnm{Lee}, \binits{E.A.}}:
\batitle{Toward a {Lingua} {Franca} for deterministic concurrent systems}.
\bjtitle{ACM Transactions on Embedded Computing Systems (TECS)}
\bvolume{20}(\bissue{4}),
\bfpage{1}--\blpage{27}
(\byear{2021})
\end{barticle}
\endbibitem

%%% 16
\bibitem[\protect\citeauthoryear{Amnell et~al.}{2004}]{amnell2004times}
\begin{bchapter}
\bauthor{\bsnm{Amnell}, \binits{T.}},
\bauthor{\bsnm{Fersman}, \binits{E.}},
\bauthor{\bsnm{Mokrushin}, \binits{L.}},
\bauthor{\bsnm{Pettersson}, \binits{P.}},
\bauthor{\bsnm{Yi}, \binits{W.}}:
\bctitle{Times: a tool for schedulability analysis and code generation of
  real-time systems}.
In: \bbtitle{Formal Modeling and Analysis of Timed Systems: First International
  Workshop, FORMATS 2003, Marseille, France, September 6-7, 2003. Revised
  Papers 1},
pp. \bfpage{60}--\blpage{72}
(\byear{2004}).
\bcomment{Springer}
\end{bchapter}
\endbibitem

%%% 17
\bibitem[\protect\citeauthoryear{Kopetz and Bauer}{2003}]{TTA}
\begin{barticle}
\bauthor{\bsnm{Kopetz}, \binits{H.}},
\bauthor{\bsnm{Bauer}, \binits{G.}}:
\batitle{The time-triggered architecture}.
\bjtitle{Proceedings of the IEEE}
\bvolume{91}(\bissue{1}),
\bfpage{112}--\blpage{126}
(\byear{2003})
\doiurl{10.1109/JPROC.2002.805821}
\end{barticle}
\endbibitem

%%% 18
\bibitem[\protect\citeauthoryear{Theelen et~al.}{2006}]{fsmsadf}
\begin{bchapter}
\bauthor{\bsnm{Theelen}, \binits{B.}},
\bauthor{\bsnm{Geilen}, \binits{M.}},
\bauthor{\bsnm{Basten}, \binits{T.}},
\bauthor{\bsnm{Voeten}, \binits{J.}},
\bauthor{\bsnm{Gheorghita}, \binits{S.V.}},
\bauthor{\bsnm{Stuijk}, \binits{S.}}:
\bctitle{A scenario-aware data flow model for combined long-run average and
  worst-case performance analysis}.
In: \bbtitle{Fourth ACM and IEEE International Conference on Formal Methods and
  Models for Co-Design, 2006. MEMOCODE'06. Proceedings.},
pp. \bfpage{185}--\blpage{194}
(\byear{2006}).
\bcomment{IEEE}
\end{bchapter}
\endbibitem

%%% 19
\bibitem[\protect\citeauthoryear{Peterson}{1981}]{peterson1981petri}
\begin{bbook}
\bauthor{\bsnm{Peterson}, \binits{J.L.}}:
\bbtitle{Petri Net Theory and the Modeling of Systems},
\bedition{1}st edn.
\bpublisher{Prentice Hall PTR},
\blocation{Upper Saddle River, NJ, United States}
(\byear{1981})
\end{bbook}
\endbibitem

%%% 20
\bibitem[\protect\citeauthoryear{Alur and Dill}{1994}]{alur1994theory}
\begin{barticle}
\bauthor{\bsnm{Alur}, \binits{R.}},
\bauthor{\bsnm{Dill}, \binits{D.L.}}:
\batitle{A theory of timed automata}.
\bjtitle{Theoretical computer science}
\bvolume{126}(\bissue{2}),
\bfpage{183}--\blpage{235}
(\byear{1994})
\end{barticle}
\endbibitem

\end{thebibliography}

\end{document}